\renewcommand{\fnum@figure}{Figure \thefigure}
\newtheorem{lemma}{Lemma}
\newtheorem{example}{Example}
\newtheorem{definition}{Definition}
\newtheorem{theorem}{Theorem}
\newtheorem{proposition}{Proposition}
\newcommand{\qubos}{\textsf{Qubos}\xspace}
\newcommand{\quantor}{\textsf{Quantor}\xspace}
\newcommand{\qell}{\textsf{QELL}\xspace}
\newcommand{\hqspre}{\textsf{HQSpre}\xspace}
\newcommand{\bloqqer}{\textsf{Bloqqer}\xspace}
\newcommand{\ghostqcegar}{\textsf{GhostQ}\xspace}
\newcommand{\qsts}{\textsf{QSTS}\xspace}
\newcommand{\qstsdefnobreaksym}{\textsf{QSTS}\xspace}
\newcommand{\depqbfbat}{\textsf{DepQBF}\xspace}
\newcommand{\caqe}{\textsf{CAQE}\xspace}
\newcommand{\qesto}{\textsf{QESTO}\xspace}
\newcommand{\rareqs}{\textsf{RAReQS}\xspace}
\newcommand{\dynqbf}{\textsf{DynQBF}\xspace}
\newcommand{\heretic}{\textsf{Heretic}\xspace}
\newcommand{\ijtihad}{\textsf{Ijtihad}\xspace}
\newcommand{\quterandom}{\textsf{Qute}\xspace}
\newcommand{\revqfun}{\textsf{Rev-Qfun}\xspace}
\newcommand{\depqbfprefixopt}{\textsf{DepQBF}\xspace}
\title{Expansion-Based QBF Solving  Without Recursion\thanks{This work has been supported by the Austrian Science Fund (FWF)
    under projects W1255-N23, S11406-N23, S11408-N23, and S11409-N23. 
  The paper will appear in the \textbf{proceedings} of FMCAD 2018, IEEE.}
}
\author{\IEEEauthorblockN{Roderick Bloem, Nicolas Braud-Santoni, Vedad Hadzic,}
\IEEEauthorblockA{TU Graz\\
}
\and
\IEEEauthorblockN{Uwe Egly, Florian Lonsing,}
\IEEEauthorblockA{TU Wien\\}
\and
\IEEEauthorblockN{Martina Seidl,}
\IEEEauthorblockA{JKU Linz\\
}}
\tikzset{
  treenode/.style = {align=center, inner sep=0pt, text centered, font=\tiny},
  univ/.style = {treenode, circle, black, draw=black, fill=white, minimum width=0.3em, minimum height=0.3em},
  exis/.style = {treenode, square, black, draw=black, fill=white, minimum width=0.3em, minimum height=0.3em},
  leaf/.style = {treenode, diamond, black, draw=none, fill=none}
}
\newlength{\treedistv}
\newlength{\treedisth}
\newlength{\treescript}
\newcommand{\ts}[1]{\scaleto{#1}{\treescript}}
\begin{document}
\maketitle

\begin{abstract}
In recent years, expansion-based techniques have been shown to be
very powerful in theory 
and practice for solving quantified Boolean formulas (QBF), 
the extension of propositional formulas with 
existential and universal quantifiers over 
Boolean variables. Such approaches partially expand one 
type of variable (either existential or universal) 
and pass the obtained formula to a SAT solver for deciding the QBF. 
State-of-the-art expansion-based solvers process the given formula 
quantifier-block wise and recursively apply expansion  
until a solution is found. 

In this paper, 
we present a novel algorithm for expansion-based QBF solving 
that deals with the whole quantifier prefix at once. 
Hence recursive applications of the expansion principle are avoided. 
Experiments indicate that the performance of our simple approach 
is comparable with the state of the art of QBF solving, especially 
in combination with other solving techniques. 

\end{abstract}

\section{Introduction}
\label{sec:intro}

Efficient tools for deciding the satisfiability of 
Boolean formulas (SAT solvers) are the core technology in many
verification and synthesis 
approaches~\cite{DBLP:journals/pieee/VizelWM15}. However, 
verification and synthesis problems are often beyond 
the complexity class NP as captured by SAT, 
requiring more powerful formalisms like \emph{quantified Boolean formulas} (QBFs). QBFs extend 
propositional formulas by universal and existential quantifiers over 
Boolean variables~\cite{DBLP:series/faia/BuningB09}
resulting in a decision problem that  is PSPACE-complete. 
Applications from verification and synthesis~\cite{DBLP:conf/vmcai/BloemKS14,DBLP:conf/cav/ChengHR16,DBLP:conf/tacas/ChengLR17,DBLP:conf/nfm/GasconT14,DBLP:conf/tacas/FaymonvilleFRT17,DBLP:conf/sat/HeymanSMLA14}, 
realizability checking~\cite{DBLP:journals/corr/FinkbeinerT15}, 
bounded model checking~\cite{DBLP:conf/sat/DershowitzHK05,DBLP:conf/cade/Zhang14}, and 
planning~\cite{DBLP:journals/amai/EglyKLP17,DBLP:conf/aaai/Rintanen07}
motivate the quest for efficient QBF solvers.   

Unlike for SAT, where \emph{conflict-driven clause learning} (CDCL) is 
the single dominant solving approach for practical problems, 
two dominant approaches exist for QBF solving. 
On one hand, CDCL has been successfully extended 
to QCDCL that enables clause and cube 
learning~\cite{DBLP:series/faia/GiunchigliaMN09,DBLP:conf/tableaux/Letz02,DBLP:conf/cp/ZhangM02}. 
On the other hand, 
\emph{variable expansion} has become very popular.
In short, expansion-based solvers eliminate one kind of variables by 
assigning them truth values and solve the resulting 
propositional formula with a SAT solver. For QBFs with one quantifier 
alternation (2QBF), a natural approach is to use two SAT solvers: one 
that deals with the existentially quantified variables and another one that deals with 
the universally quantified variables. For generalising this SAT-based  
approach to QBFs with an arbitrary number of quantifier alternations, 
expansion is recursively applied per quantifier block, 
requiring multiple SAT solvers. 
As noted by Rabe and Tentrup~\cite{DBLP:conf/fmcad/RabeT15},
these CEGAR-based approaches show poor performance for formulas with many 
quantifier alternations in general.  

In this paper, we present a 
novel solving algorithm based on non-recursive expansion 
for QBFs with arbitrary quantifier prefixes using only two SAT solvers.
Our approach of non-recursive expansion is
theoretically (i.e., from a proof complexity perspective) equivalent
to approaches that apply recursive expansion since both non-recursive
and recursive expansion rely on the $\forall$Exp+Res proof 
system~\cite{DBLP:conf/mfcs/BeyersdorffCJ14}. 
However, the non-recursive expansion has practical 
implications such as a modified search strategy. 
That is, the use of recursive or non-recursive expansion  results 
in different search strategies for the proof. 
With respect to proof search, there is an analogy to,
e.g., implementations of resolution-based CDCL SAT solvers that employ
different search heuristics. 

In addition, we implemented a hybrid 
approach that combines clause learning with non-recursive expansion-based 
solving for exploiting the power of QCDCL. 
Our experiments indicate that this hybrid approach 
performs very well, 
especially on formulas with multiple quantifier alternations. 

This paper is structured as follows. After a review of related work in 
the next section, we introduce the necessary preliminaries in 
Section~\ref{sec:prelims}. After a short recapitulation 
of expansion in Section~\ref{sec:exp}, our novel non-recursive expansion-based 
algorithm is presented in Section~\ref{sec:alg}. Implementation  
details are discussed in Section~\ref{sec:impl} together with 
a short discussion of the hybrid approach. In Section~\ref{sec:eval}
we compare our approach to state-of-the-art solvers.

\section{Related Work}
\label{sec:rw}

Already the early QBF solvers \qubos~\cite{DBLP:conf/fmcad/AyariB02} and
\quantor~\cite{DBLP:conf/fmcad/AyariB02} incorporate selective quantifier
expansion for eliminating one kind of quantification to reduce the given QBF
to a propositional formula. The resulting propositional formula is then solved
by calling a SAT solver once.  \qubos and \quantor impressively demonstrated
the power of expanding universal variables but also showed its enormous memory
consumption.  As a pragmatic compromise, bounded universal expansion was
introduced for efficient
preprocessing~\cite{DBLP:conf/sat/BubeckB07,DBLP:conf/sat/GiunchigliaMN10,DBLP:journals/jair/HeuleJLSB15,DBLP:conf/tacas/WimmerRM017}.

The first approach which uses two alternate SAT solvers $A$ and $B$ 
for solving 2QBF, 
i.e., QBFs of the form $\forall U \exists E.\phi$, was
presented in~\cite{DBLP:conf/sat/RanjanTM04}. Solver $A$ is initialised with $\phi$, $B$ 
with the empty formula. Both propositional formulas are incrementally refined
with satisfying assignments found by the other solver.  
If $A$ finds its formula unsatisfiable, then the QBF is false. Otherwise, 
the negation of the universal part of the satisfying assignment is passed 
to solver $B$.  
If solver $B$ finds its formula unsatisfiable, then the QBF is true. 
Otherwise, the existential part of the satisfying assignment is 
passed to solver $A$. Janota and Marques-Silva
generalised the idea of alternating SAT 
solvers~\cite{DBLP:conf/sat/JanotaS11} such that one solver deals with the existentially quantified 
variables and one solver deals with the universally quantified variables
exclusively. Solver $A$ gets \emph{instantiations} of $\phi$
in which the universal variables are assigned, and solver $B$ gets instantiations
of $\lnot\phi$ in which the existential variables are assigned. The satisfying 
assignment found by one solver is used to obtain a new instantiation 
for the other. This loop is repeated until one solver returns 
unsatisfiable. 
This approach realises a natural application of the counter-example guided 
abstraction refinement (CEGAR) paradigm~\cite{DBLP:journals/jacm/ClarkeGJLV03}. 
A detailed survey on 2QBF solving is given in~\cite{DBLP:conf/sat/BalabanovJSMB16}.  

A significant advancement  of expansion-based solving for QBF with 
an arbitrary number of quantifier alternations was made with the 
solver \rareqs~\cite{Janota20161,DBLP:conf/sat/JanotaKMC12}, which recursively applies the 
previously discussed 2QBF approach~\cite{DBLP:conf/sat/JanotaS11} 
for each quantifier alternation. The approach turned out to be 
highly competitive.\footnote{\url{http://www.qbflib.org}} For formalising this solving approach the calculus $\forall$Exp+Res
was introduced~\cite{DBLP:conf/mfcs/BeyersdorffCJ14}, and proof-theoretical investigations revealed 
the orthogonal strength of $\forall$Exp+Res and Q-resolution~\cite{DBLP:journals/iandc/BuningKF95}, the 
QBF variant of the resolution calculus that forms the basis for 
QCDCL-based solvers. 
Research on the proof complexity of QBF has identified an
exponential separation between Q-resolution and the $\forall$Exp+Res
system. There are families of QBFs for which any Q-resolution
proof has exponential size, in contrast to $\forall$Exp+Res
proofs of polynomial size, and vice versa. Hence these two systems have orthogonal strength.

Recent work successfully combines machine learning with this 
CEGAR approach~\cite{DBLP:conf/aaai/Janota18}.
Motivated by the success of expansion-based QBF solving, several 
other approaches~\cite{DBLP:conf/ijcai/JanotaM15,DBLP:conf/fmcad/RabeT15,DBLP:conf/aaai/0001JT16,DBLP:conf/sat/TuHJ15,DBLP:conf/sat/Tentrup16,DBLP:conf/cav/Tentrup17} have been presented that are based on levelised 
SAT solving, i.e., one SAT solver is responsible for the variables of 
one quantifier block. 
In this paper, we also introduce a solving approach that is based upon 
propositional abstraction but considers the whole quantifier prefix 
at once.  

\iffalse
\begin{itemize}
  
\item Clause selection and abstraction (CEGAR~\cite{DBLP:journals/jacm/ClarkeGJLV03}): \caqe~\cite{DBLP:conf/fmcad/RabeT15,DBLP:conf/cav/Tentrup17}, \qesto~\cite{DBLP:conf/ijcai/JanotaM15}.

\item \emph{Recursive} expansion and abstraction (CEGAR~\cite{DBLP:journals/jacm/ClarkeGJLV03}): \rareqs~\cite{Janota20161}; \revqfun~\cite{DBLP:conf/aaai/Janota18} combines machine learning techniques with the CEGAR loop as implemented in \rareqs; 

\item 2QBF solving: in the field of QBF solving, one of the earliest applications of the CEGAR principle is 2QBF solving~\cite{DBLP:conf/sat/BalabanovJSMB16,DBLP:conf/sat/RanjanTM04,DBLP:conf/sat/JanotaS11}; 

\item \qell~\cite{DBLP:conf/sat/TuHJ15}: to be checked; they also use only two SAT solvers, but their algorithm processes quantifier blocks from left to right.

\item nested SAT solving: \qsts~\cite{DBLP:conf/aaai/0001JT16,DBLP:conf/sat/0001JT16}.

\item not sure about the relation of this one to our work: \cite{DBLP:conf/cav/Jiang09} \textbf{I read it, seems to do interpolation to skolomise a 2QBF - Vedad}

\item $\forall$Exp+Res calculus:  
\end{itemize}
\fi

\section{Preliminaries}
\label{sec:prelims}

The QBFs considered in this paper  are in prenex normal 
form $\Pi.\phi$ where 
$\Pi$ is a quantifier prefix $Q_1x_1Q_2x_2\ldots Q_nx_n$ 
over the set of variables $X = \{x_1,\ldots ,x_n\}$ with 
$Q_i \in \{\forall, \exists\}$ and $x_i \neq x_j$ for $i \neq j$.  
The propositional formula  $\phi$
contains only variables from $X$.
Unless stated otherwise, we do not make any assumptions on the structure
of $\phi$. Sometimes $\Pi.\phi$ is in \emph{prenex conjunctive normal form}
 (PCNF), 
i.e., $\Pi$ is a prefix as introduced before and 
$\phi$ is a conjunction of clauses. A clause is a disjunction of literals, 
and a literal is a variable or the negation of a variable. 
The prefix imposes the order $<_\Pi$ on the elements of $X$ such 
that $x_i <_\Pi x_j$ if $i < j$.
By $U_\Pi$ ($E_\Pi$) we denote the set of universally (existentially) 
quantified variables of the prefix $\Pi$. If clear from the 
context we omit the subscript $\Pi$.  
We assume the standard semantics of QBF. 
A QBF consisting of only the syntactic truth constant $\bot$ ($\top$) is false (true). 
A QBF $\forall x\Pi.\phi$ is 
true if $\Pi.\phi[x \leftarrow \top]$ and $\Pi.\phi[x \leftarrow \bot]$ 
are both true, where 
$\phi[x \leftarrow t]$ is the substitution of $x$ by $t$ in $\phi$.  
A QBF $\exists x\Pi.\phi$ is true if $\Pi.\phi[x \leftarrow \top]$ or $\Pi.\phi[x \leftarrow \bot]$ is true. 

Given a set of variables $X$, we call a function $\sigma\colon X \to \{\top,\bot,\epsilon\}$ an assignment for $X$. If there is an 
$x\in X$ with $\sigma(x) = \epsilon$ then $\sigma$ is a \emph{partial} assignment,
otherwise $\sigma$ is a \emph{full} assignment of $X$.  
Informally, $\sigma(x) = \epsilon$ means that $\sigma$ does not assign 
a truth value to variable $x$. 
A restriction 
$\sigma|_Y\colon Y\to \{\top,\bot,\epsilon\}$ of assignment 
$\sigma\colon X \to \{\top,\bot,\epsilon\}$ to $Y \subseteq X$ is defined 
by $\sigma|_Y(x) = \sigma(x)$ if $x \in Y$, otherwise  $\sigma|_Y(x) = \epsilon$. By $\Sigma_X$ we denote the set of all full assignments 
$\sigma\colon X\to \{\top,\bot\}$.
Let $\phi$ be a propositional formula over  $X$. 
By $\sigma(\phi)$ we denote the application of 
assignment $\sigma\colon X \to \{\top,\bot,\epsilon\}$
on  $\phi$, i.e., 
$\sigma(\phi)$ is the formula obtained by replacing variables 
$x \in X$ by $\sigma(x)$ if $\sigma(x) \in \{\top,\bot\}$ and performing 
standard 
propositional simplifications. Let $\phi, \psi$ be propositional formulas 
over the set of variables $X$. If for every full assignment $\sigma \in \Sigma_X$,
$\sigma(\phi) = \sigma(\psi)$ then $\phi$ and $\psi$ are equivalent. 
Let $\tau\colon X\to \{\top,\bot,\epsilon\}$ and 
$\sigma\colon Y\to \{\top,\bot,\epsilon\}$ be assignments such that 
for every $x \in X \cap Y$, $\tau (x) = \sigma(x)$ if $\tau(x) \not= \epsilon$ 
and $\sigma(x) \not= \epsilon$. 
Then the composite assignment of $\sigma$ and $\tau$ is denoted by $\sigma\tau\colon X\cup Y \to \{\top,\bot,\epsilon\}$ and for every propositional 
formula $\phi$ over $X \cup Y$, it holds that 
$\sigma\tau(\phi) = \tau\sigma(\phi) = \sigma(\tau(\phi)) = \tau(\sigma(\phi))$. Furthermore, $\sigma\sigma = \sigma$ for any 
assignment $\sigma$. 

\begin{example}
Let $\sigma\colon X\to\{\top,\bot,\epsilon\}$ be an assignment over variables $\{a, b, x, y\}$ defined 
by $\sigma(a) = \top$, $\sigma(b) = \epsilon$, 
$\sigma(x) = \top$, and $\sigma(y) = \epsilon$. 
The restriction $\tau = \sigma|_Y$ of $\sigma$ to $Y = \{x, y\}$ is given by 
$\tau(a) = \epsilon$, $\tau(b) = \epsilon$, 
$\tau(x) = \top$, $\tau(y) = \epsilon$. 
For the propositional formula $\phi = (x \lor a \lor y) \land (\neg x \lor \neg a \lor y) \land (\neg y \lor b)$, the application of $\sigma$ and $\tau$ 
on $\phi$ gives us $\sigma(\phi) = y \land (\neg y \lor b)$ and 
$\tau(\phi) = (\neg a \lor y) \land (\neg y \lor b)$.
\end{example}

\section{Expansion}
\label{sec:exp}

In the following, we introduce the notation and terminology used for 
describing expansion-based QBF solving in general, and 
the algorithm introduced in the next section in particular. 
We first define the notion of \emph{instantiation}
that is inspired by 
the axiom rule of the calculus $\forall$Exp+Res~\cite{DBLP:journals/tcs/JanotaM15}.

\begin{definition}
Let $\Pi.\phi$ be a QBF with prefix $\Pi = Q_1x_1\ldots Q_nx_n$ 
over the set of variables $X = \{x_1,\ldots ,x_n\}$ 
and $\sigma\colon Y \to \{\top,\bot,\epsilon\}$ with $Y \subseteq X$ an 
assignment. If $Y \subset X$, we extend the domain of $\sigma$ to X by setting
$\sigma(x) = \epsilon$ if $x \not\in Y$. 
 The \emph{instantiation of $\phi$ by $\sigma$}, denoted by $\phi^\sigma$,  is obtained from $\phi$ as follows:
\begin{enumerate}
\item all variables 
$x \in X$ with $\sigma(x) \neq \epsilon$ are set to $\sigma(x)$;
\item all variables $x \in X$ with $\sigma(x) = \epsilon$ are replaced by 
$x^\omega$ where annotation $\omega$ is uniquely defined by 
the sequence
$\sigma(x_{k_1})\sigma(x_{k_2})\ldots\sigma(x_{k_m})$ such that the set formed from the variables $x_{k_i}$ contains
all variables of $X$ with $x_{k_i} <_\Pi x$.
Furthermore, $x_{k_i} <_\Pi x_{k_j}$ if $k_i < k_j$. 
\end{enumerate}
\end{definition}

If we instantiate a QBF $\Pi.\phi$ with 
the full assignment  
$\sigma\colon U_\Pi\to \{\top, \bot\}$ 
of the universal variables, we obtain a propositional formula that contains only (possibly 
annotated) variables from $E_\Pi$.  
The dual holds for the instantiation by a full assignment $\sigma\colon E_\Pi\to \{\top, \bot\}$.

\begin{example}\label{lab:ex}
Given the QBF $\forall a \exists x \forall b \exists y .\phi$ with $\phi =  (
(x \lor a \lor y) \land (\neg x \lor \neg a \lor y) \land (\neg y \lor b))$. 
Then $ U = \{a, b\}$ and $E = \{x, y\}$. Let $\sigma\colon U\to\{\top,\bot,\epsilon\}$ be defined by $\sigma(a) = \top$ and $\sigma(b) = \bot$. Then 
$\phi^\sigma = (\neg x^{\top} \lor y^{\top\bot}) \land \lnot y^{\top\bot}$. Further, let 
$\tau\colon E\to \{\top,\bot,\epsilon\}$ with $\tau(x) = \bot$ and $\tau(y) = \bot$. Then $\phi^\tau = a$. Note that $a$ is not annotated because it 
occurs in the first quantifier block.  
\end{example}

Sometimes we want to remove the annotations again from 
an assignment or an instantiated 
formula. Therefore, we introduce the following notation. 
Let $\phi^\sigma$ be an instantiation by assignment $\sigma\colon X\to\{\top,\bot,\epsilon\}$ and $X^\sigma$ the set of annotated variables.
If we have an assignment $\tau\colon X^\sigma\to \{\top,\bot, \epsilon\}$, then 
we define $\tau^{-\sigma}\colon X\to \{\top,\bot\}$ by 
$\tau^{-\sigma}(x) = \tau(x^\sigma)$ for $x^\sigma \in X^\sigma$. 
If we have an instantiated formula 
$\phi^\sigma$, the $(\phi^\sigma)^{-\sigma}$ is the formula obtained by 
replacing every annotated variable $x^\sigma \in X^\sigma$ by $x$. 
In general, $(\phi^\sigma)^{-\sigma} \neq \phi$. 

\begin{lemma}
\label{lem:eq}
Let $\Pi.\phi$ be a QBF with variables $X$ and 
$\sigma\colon X\to \{\top,\bot,\epsilon\}$ be a partial assignment. 
Then $(\phi^\sigma)^{-\sigma}$ and $\sigma(\phi)$ are 
equivalent. 
\end{lemma}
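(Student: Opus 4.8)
The plan is to use the fact that equivalence here is a purely semantic notion, so it suffices to show that $(\phi^\sigma)^{-\sigma}$ and $\sigma(\phi)$ take the same truth value under every full assignment of their variables. First I would identify those variables: both formulas live over the set $W = \{x \in X : \sigma(x) = \epsilon\}$ of variables left unassigned by $\sigma$, because in either construction the assigned variables have been replaced by truth constants, and the de-annotation $(\cdot)^{-\sigma}$ sends each annotated variable $x^\sigma$ back to its base variable $x \in W$.

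The conceptual core is that the instantiation map $\phi \mapsto \phi^\sigma$ performs two logically independent operations: it substitutes the truth value $\sigma(x)$ for every assigned variable, exactly as $\sigma(\phi)$ does, and it renames each remaining unassigned variable $x$ to the annotated symbol $x^\omega$. Since the annotation $\omega$ is appended to, but never overwrites, the base name $x$, this renaming is injective on $W$, and the de-annotation $(\cdot)^{-\sigma}$ is its two-sided inverse. Thus $(\cdot)^{-\sigma}$ undoes precisely the renaming while leaving the value-substitution intact, so that $(\phi^\sigma)^{-\sigma}$ is nothing but $\phi$ with the assigned variables replaced by their $\sigma$-values---up to the semantics-preserving propositional simplifications performed during instantiation---which is the content of $\sigma(\phi)$.

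To make this rigorous I would evaluate both formulas under an arbitrary full assignment $\rho$ of $W$ and let $\mu = \sigma\rho$ be the full assignment of $X$ agreeing with $\sigma$ on the assigned variables and with $\rho$ on $W$. On one side, the composition property of assignments gives $\rho(\sigma(\phi)) = (\rho\sigma)(\phi) = \mu(\phi)$. On the other side, evaluating $(\phi^\sigma)^{-\sigma}$ under $\rho$ equals evaluating $\phi^\sigma$ under the assignment that sends each $x^\omega$ to $\rho(x)$, since de-annotation is only a consistent renaming and the truth value of a formula is invariant under such renaming; and that value is again $\mu(\phi)$, because $\phi^\sigma$ is obtained from $\phi$ by substituting $\sigma(x)$ for the assigned variables and renaming the unassigned ones, followed only by semantics-preserving simplification. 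Hence both formulas take the value $\mu(\phi)$ for every $\rho$, giving the equivalence.

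The main obstacle I anticipate is bookkeeping around the annotations rather than any deep step: one must verify that de-annotation is well defined and injective on the annotated variables actually occurring in $\phi^\sigma$, i.e.\ that for a fixed $\sigma$ no two distinct unassigned variables can give rise to the same annotated symbol (they cannot, as the base name is retained and, for fixed $\sigma$, the annotation of each $x \in W$ is uniquely determined), and that the simplifications invoked in building $\phi^\sigma$ correspond exactly to those in $\sigma(\phi)$ once annotations are stripped. Discharging these points is what converts the intuitive picture ``de-annotation inverts the renaming'' into a rigorous proof of equivalence.
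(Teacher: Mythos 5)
Your proof is correct, but it takes a genuinely different route from the paper's. The paper proves the lemma by structural induction on $\phi$: the base case $\phi = x$ checks the two cases $\sigma(x) = \epsilon$ (where instantiation annotates and de-annotation recovers $x$, matching $\sigma(\phi) = x$) and $\sigma(x) \in \{\top,\bot\}$ (where both sides collapse to the constant $\sigma(x)$), and the induction step is dispatched by noting that instantiation, de-annotation, and application of $\sigma$ all commute with the logical connectives. You instead argue semantically: you unfold the paper's definition of equivalence (agreement under every full assignment $\rho$ of the unassigned variables $W$), factor instantiation into value-substitution plus an injective renaming $x \mapsto x^\omega$ whose two-sided inverse is de-annotation, and show both sides evaluate to $\mu(\phi)$ with $\mu = \sigma\rho$ via the composition law for assignments stated in the preliminaries. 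The trade-off: your argument is more conceptual and isolates exactly where the annotation bookkeeping matters (well-definedness and injectivity of the renaming for a fixed $\sigma$, and the harmlessness of the propositional simplifications), whereas the paper's induction is self-contained; the two standard facts you invoke---invariance of truth value under consistent variable renaming, and correctness of the substitution/composition step---would themselves be proved by precisely the structural induction the paper performs, so in effect you have modularized the induction into cited lemmas rather than eliminated it. Both proofs are at a comparable level of rigor, and yours arguably makes the role of the annotations more transparent.
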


\begin{proof}
By induction over the formula structure. For the base case let $\phi = x$ 
with $x \in X$. If $\sigma(x) = \epsilon$,  
 $\sigma(\phi) = x$ and
$\phi^\sigma = x$. 
Then  $(\phi^\sigma)^{-\sigma} = x$.
Otherwise, $\phi^\sigma = \sigma(x)$. Obviously, $\sigma(\phi) = \sigma(x) = (\sigma(x))^{-\sigma} \in 
\{\top,\bot\}$. The induction step naturally follows from the semantics of 
the logical connectives.   
\end{proof}

\begin{example}
Reconsider the propositional formula $\phi$ and assignments $\sigma, \tau$ 
from above (Example~\ref{lab:ex}). Then $(\phi^\sigma)^{-\sigma} = 
((\neg x^{\top} \lor y^{\top\bot}) \land \lnot y^{\top\bot})^{-\sigma} = (\neg x \lor y) \land \lnot y$. 
Furthermore, 
$(\phi^\tau)^{-\tau} = (a)^{-\tau} = a$. 
\end{example}

Finally, we specify the semantics of a QBF in terms of universal and 
existential expansion on which expansion-based QBF solving is founded.

\begin{lemma}\label{lem:unsat}
Let $\Phi = \Pi.\phi$ be a QBF with universal variables $U$. 
There is a set of assignments $A \subseteq \Sigma_U$ with
$\bigwedge_{\alpha \in A}\phi^\alpha$ is  
unsatisfiable if and only if $\Phi$ is false. 
\end{lemma}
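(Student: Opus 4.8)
The plan is to prove both directions of the biconditional, relating the satisfiability of a finite conjunction of universal instantiations to the truth value of the QBF. The key semantic fact I would lean on is Lemma~\ref{lem:eq}, which tells us that $(\phi^\alpha)^{-\alpha}$ is equivalent to $\alpha(\phi)$ for any assignment $\alpha$. Since each $\alpha \in \Sigma_U$ is a \emph{full} assignment of the universal variables, $\alpha(\phi)$ is a propositional formula over the existential variables $E$ only, and $\phi^\alpha$ is the same formula but with the existential variables carrying annotations that record the universal assignment $\alpha$ under which they were instantiated. The crucial observation is that two instantiations $\phi^\alpha$ and $\phi^\beta$ with $\alpha \neq \beta$ use \emph{disjoint} sets of annotated variables, because the annotation on each existential variable encodes the values of all universally quantified variables preceding it, and distinct full assignments $\alpha, \beta$ differ on at least one such variable.

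\textbf{The ($\Leftarrow$) direction.} Suppose $\Phi$ is false. I would argue by induction on the number of universal variables, mirroring the recursive QBF semantics. When $\Phi = \forall u\, \Pi'.\phi$ is false, by definition at least one of $\Pi'.\phi[u \leftarrow \top]$ or $\Pi'.\phi[u \leftarrow \bot]$ is false; applying the induction hypothesis to the false branch yields a finite set of universal assignments whose conjunction of instantiations is unsatisfiable, and I would extend these assignments by the chosen value of $u$. The base case, where $U = \emptyset$, reduces to the statement that a purely existential QBF $\exists E.\phi$ is false exactly when $\phi$ (equivalently, $\phi^{\emptyset}$ with the empty assignment) is unsatisfiable, which is immediate from the semantics. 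The disjoint-annotation property ensures the conjunction built across branches genuinely forces unsatisfiability rather than accidentally sharing variables that could be satisfied independently.

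\textbf{The ($\Rightarrow$) direction.} For the converse, I would establish the contrapositive: if $\Phi$ is true, then for \emph{every} finite set $A \subseteq \Sigma_U$ the conjunction $\bigwedge_{\alpha \in A}\phi^\alpha$ is satisfiable. Since $\Phi$ is true, it admits a Skolem function model assigning to each existential variable a value depending only on the universal variables preceding it. Given any $A$, I would construct a satisfying assignment for $\bigwedge_{\alpha \in A}\phi^\alpha$ by setting each annotated variable $x^{\omega}$ to the value dictated by the Skolem function of $x$ applied to the universal assignment encoded in $\omega$. Because distinct annotations correspond to distinct universal contexts and are never shared between different $\alpha$'s, these assignments never conflict, and each conjunct $\phi^\alpha$ is satisfied precisely because the Skolem model makes $\alpha(\phi)$ true, which by Lemma~\ref{lem:eq} transfers to the annotated form $\phi^\alpha$.

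The main obstacle I anticipate is making the annotation bookkeeping fully rigorous: I must verify carefully that the annotation $\omega$ on each existential variable records exactly the relevant prefix of the universal assignment, so that the mapping from Skolem functions to assignments of annotated variables is both well-defined and consistent across all conjuncts. The equivalence of the annotated semantics with the substitution semantics is exactly what Lemma~\ref{lem:eq} provides, so I would phrase the argument so that every passage between $\phi^\alpha$ and $\alpha(\phi)$ invokes that lemma explicitly, keeping the combinatorial argument about disjointness of annotated variable sets as the genuinely new content.
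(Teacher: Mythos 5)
The paper itself states Lemma~\ref{lem:unsat} without proof (it is the semantic foundation of expansion-based solving, imported from the $\forall$Exp+Res literature), so there is no in-paper argument to compare against; your proposal has to stand on its own, and it does not, for two reasons.

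First, your ``crucial observation''---that $\phi^\alpha$ and $\phi^\beta$ with $\alpha \neq \beta$ use \emph{disjoint} sets of annotated variables---is false, and it is load-bearing in your argument. Under the paper's definition of instantiation, an existential variable $x$ is annotated only with the values of the universal variables \emph{preceding} $x$ in the prefix. Two distinct full assignments $\alpha, \beta \in \Sigma_U$ may agree on all universals preceding $x$ and differ only on later ones; then the very same annotated variable $x^\omega$ occurs in both $\phi^\alpha$ and $\phi^\beta$. In particular, every existential variable of the outermost block (empty annotation) is shared by \emph{all} instantiations. This sharing is not a technicality to be ruled out---it is exactly what makes the lemma true. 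Take $\Phi = \exists x \forall u.(x \leftrightarrow u)$: $\Phi$ is false, and its expansion $(x \leftrightarrow \top) \land (x \leftrightarrow \bot)$ is unsatisfiable only because both conjuncts share $x$; if the conjuncts had disjoint variables, the conjunction would be satisfiable and the left-to-right direction of the lemma would fail. In your ($\Rightarrow$) direction the error happens to be harmless, because the assignment $x^\omega \mapsto f_x(\omega)$ is defined per annotated variable and is therefore automatically consistent on shared variables---but then the correct justification is well-definedness, not disjointness.

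Second, the induction in your ($\Leftarrow$) direction is structurally incomplete. Your inductive step treats only $\Phi = \forall u\,\Pi'.\phi$, and your base case only $U = \emptyset$; iterating the step therefore covers only prefixes of the form $\forall \cdots \forall \exists \cdots \exists$. Any QBF with a genuine alternation $\exists \ldots \forall \ldots$---the only interesting case for this paper---falls outside your case analysis. The missing case $\Phi = \exists x\,\Pi'.\phi$ with $U \neq \emptyset$ is precisely where the shared variables do the work: from the hypothesis applied to the two false branches you obtain unsatisfiable expansions $\bigwedge_{\alpha \in A_\top}(\phi[x \leftarrow \top])^\alpha$ and $\bigwedge_{\alpha \in A_\bot}(\phi[x \leftarrow \bot])^\alpha$, and you must then argue that $\bigwedge_{\alpha \in A_\top \cup A_\bot}\phi^\alpha$ is unsatisfiable because the shared, unannotated variable $x$ forces any candidate model to commit to a single value $t$, contradicting the unsatisfiability of the branch for $t$. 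Note also that your induction measure, the number of universal variables, does not decrease when a leading existential quantifier is stripped, so the induction must instead run on the length of the prefix with a case split on the outermost quantifier. Your Skolem-function argument for the ($\Rightarrow$) direction is essentially sound once its consistency justification is repaired, though you should be aware that the existence of Skolem models for true QBFs is itself a fact of the same depth as the lemma, proved by the same kind of induction you are omitting.
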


The lemma above has a dual version 
for true QBFs. This duality plays a prominent role in 
our novel solving algorithm.

\begin{lemma}\label{lem:sat}
Let $\Phi = \Pi.\phi$ be a QBF with existential variables $E$. 
There is a set of assignments $S \subseteq \Sigma_E$ with 
$\bigvee_{\sigma \in S}\phi^\sigma$ is  
valid if and only if $\Phi$ is true. 
\end{lemma}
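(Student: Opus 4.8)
The plan is to obtain Lemma~\ref{lem:sat} from Lemma~\ref{lem:unsat} by duality rather than repeating a structural induction from scratch. The guiding observation is that $\Phi = \Pi.\phi$ is true precisely when its negation $\overline{\Pi}.\neg\phi$ is false, where $\overline{\Pi}$ is the prefix obtained from $\Pi$ by swapping every $\forall$ with $\exists$ and vice versa while keeping the variable ordering unchanged. This equivalence follows directly from the standard QBF semantics recalled in Section~\ref{sec:prelims} (a one-line induction on the prefix using the truth clauses for $\forall x\Pi.\phi$ and $\exists x\Pi.\phi$). The key point is that the existential variables $E$ of $\Pi$ are exactly the \emph{universal} variables of $\overline{\Pi}$, and that $<_{\overline{\Pi}}$ coincides with $<_\Pi$.

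First I would apply Lemma~\ref{lem:unsat} to the QBF $\overline{\Pi}.\neg\phi$, whose universal variables are $E$. This gives: $\overline{\Pi}.\neg\phi$ is false if and only if there is a set $S \subseteq \Sigma_E$ such that $\bigwedge_{\sigma \in S}(\neg\phi)^\sigma$ is unsatisfiable. To translate this statement back into one about $\phi$, I would then establish that instantiation commutes with negation, i.e.\ $(\neg\phi)^\sigma = \neg(\phi^\sigma)$ for every $\sigma$. Since instantiation is defined purely as a substitution---assigned variables are replaced by their truth values and unassigned variables $x$ by annotated copies $x^\omega$, with no propositional simplification performed---this identity is essentially syntactic: the substitution passes through the outer negation unchanged. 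The only point requiring care is that the annotation $\omega$ attached to each unassigned variable depends solely on $\sigma$ and on the prefix ordering, both of which are identical for $\Pi$ and $\overline{\Pi}$ and independent of whether we instantiate $\phi$ or $\neg\phi$; hence every occurrence of an unassigned variable receives the same annotation on both sides. I expect this annotation-consistency check to be the main (though short) obstacle, and it is handled by a structural induction on $\phi$ with the negation case being immediate.

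Finally I would close the argument using propositional duality. From $(\neg\phi)^\sigma = \neg(\phi^\sigma)$ we get $\bigwedge_{\sigma \in S}(\neg\phi)^\sigma = \bigwedge_{\sigma \in S}\neg(\phi^\sigma)$, and this conjunction is unsatisfiable exactly when no assignment falsifies all the $\phi^\sigma$ at once, that is, when every assignment satisfies at least one disjunct $\phi^\sigma$---which is precisely the validity of $\bigvee_{\sigma \in S}\phi^\sigma$. Chaining the equivalences, namely $\Phi$ true iff $\overline{\Pi}.\neg\phi$ false iff some $S \subseteq \Sigma_E$ makes $\bigwedge_{\sigma \in S}\neg(\phi^\sigma)$ unsatisfiable iff some $S \subseteq \Sigma_E$ makes $\bigvee_{\sigma \in S}\phi^\sigma$ valid, yields the claim. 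Everything apart from the commutation step is bookkeeping over the definitions of truth, instantiation, and validity.
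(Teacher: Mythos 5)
Your proposal is correct and takes essentially the same route as the paper: the paper gives no explicit proof of Lemma~\ref{lem:sat}, presenting it only as ``a dual version'' of Lemma~\ref{lem:unsat}, and your argument (negate the formula, swap the quantifiers so that $E$ becomes the universal block, apply Lemma~\ref{lem:unsat}, and translate back via $(\neg\phi)^\sigma \equiv \neg(\phi^\sigma)$ and unsatisfiability-of-negation equals validity) is precisely a careful formalization of that duality remark. One small correction that does not affect soundness: instantiation in the paper \emph{does} perform propositional simplification (e.g., $\phi^\tau = a$ in Example~2, where satisfied clauses vanish), so $(\neg\phi)^\sigma$ and $\neg(\phi^\sigma)$ should be claimed as logically equivalent rather than syntactically identical---which is all your chain of equivalences needs.
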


\begin{algorithm}[t]
\SetKwInOut{Input}{input}\SetKwInOut{Output}{output}
\Input{QBF $\Pi.\phi$ with universal variables $U$ and existential variables $E$}
\Output{truth value of $\Pi.\phi$}

$A_0 := \{\alpha_0\}$, where $\alpha_0\colon U \to \{\top,\bot\}$ is an arbitrary assignment \label{alg:basic:init}

$S_0 := \emptyset$

$i := 1$
\BlankLine
\While{true}{ \label{alg:basic:refloop-begin}
\BlankLine
$(\textit{isUnsat}, \tau) := \text{SAT}(\bigwedge_{\alpha \in A_{i-1}}\phi^\alpha)$ \label{alg:basic:sat-a}

\lIf{isUnsat} { return false}\label{alg:retfa}

$S_{i} := S_{i-1} \cup \{ (\tau|_{E^\alpha})^{-\alpha} \mid 
      \alpha\in A_{i-1}\}$ \label{alg:basic:upd-s}
\BlankLine
\BlankLine
$(\textit{isUnsat}, \rho) := \text{SAT}(\bigwedge_{\sigma \in S_{i}}\neg\phi^\sigma)$ \label{alg:basic:sat-s}

\lIf{isUnsat} { return true}\label{alg:rettr}

$A_i := A_{i-1} \cup \{ (\rho|_{U^\sigma})^{-\sigma} \mid 
      \sigma\in S_{i}\}$ \label{alg:basic:upd-a}

\BlankLine
\BlankLine
$i$++
\BlankLine

} \label{alg:basic:refloop-end}
\caption{Non-Recursive Expansion-Based Algorithm}
\label{alg:basic}
\end{algorithm}

\section{A Non-Recursive Algorithm for Expansion-Based QBF Solving}
\label{sec:alg}
The pseudo-code in Figure~\ref{alg:basic} summarises the basic idea of our 
novel approach for solving the QBF $\Pi.\phi$ with universal variables $U$ 
and existential variables $E$. 

First, an arbitrary assignment $\alpha_0$ for the universal 
variables is selected in Line~\ref{alg:basic:init}. The instantiation 
$\phi^{\alpha_0}$ is handed over to a SAT solver. If $\phi^{\alpha_0}$ is 
unsatisfiable,  
then $\Pi.\phi$ is false and the algorithm returns.
Otherwise, $\tau\colon E^{\alpha_0}\to \{\top,\bot\}$ is a 
satisfying assignment of $\phi^{\alpha_0}$. 
Let $\sigma_1$  denote the assignment $\tau^{-{\alpha_0}}$. 
Then ${\alpha_0}\sigma_1$ 
is a satisfying assignment of $\phi$. 

Next, the 
propositional formula $\lnot \phi^{\sigma_1}$ is handed over 
to a SAT solver for checking the validity 
of $\phi^{\sigma_1}$. 
 If $\lnot \phi^{\sigma_1}$ is unsatisfiable, then 
$\Pi.\phi$ is true and the algorithm returns. If 
$\lnot \phi^{\sigma_1}$ is satisfiable, then 
$\rho\colon U^{\sigma_1}\to \{\top,\bot\}$ is a
satisfying assignment of $\lnot \phi^{\sigma_1}$.
Let $\alpha_1$ denote the assignment $\rho^{-\sigma_1}$. 
Then $\alpha_1\sigma_1$
is a satisfying assignment for $\lnot\phi$.
The following lemma shows that $\alpha_0$ and $\alpha_1$ are different.

\begin{lemma}\label{lem:contra}
Let $\Pi.\phi$ be a QBF with universal variables $U$ and existential 
variables $E$. Further, let $\alpha\colon U\to \{\top,\bot\}$ 
be an assignment
such that the instantiation $\phi^\alpha$ is satisfiable 
and has the satisfying assignment $\tau\colon E^\alpha \to \{\top,\bot\}$.
Let $\sigma\colon E \to \{\top,\bot\}$ with
$\sigma = \tau^{-\alpha}$. 
Then $\alpha$ falsifies $(\lnot\phi^\sigma)^{-\sigma}$. 
\end{lemma}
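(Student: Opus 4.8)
The plan is to reduce the claim to the equivalence already established in Lemma~\ref{lem:eq} and then to identify the truth value of $\phi^\alpha$ under $\tau$ with the truth value of $\phi$ under the composite assignment $\alpha\sigma$.

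First I would rewrite the target formula. Since de-annotation is a purely syntactic renaming of variables, it commutes with negation, so $(\lnot\phi^\sigma)^{-\sigma} = \lnot(\phi^\sigma)^{-\sigma}$. Because $\sigma$ assigns every existential variable, $\phi^\sigma$ contains only annotated universal variables, and hence $(\phi^\sigma)^{-\sigma}$ is a propositional formula over $U$. By Lemma~\ref{lem:eq}, $(\phi^\sigma)^{-\sigma}$ is equivalent to $\sigma(\phi)$. As $\alpha\colon U\to\{\top,\bot\}$ is a full assignment of the variables of both formulas, equivalence yields $\alpha\bigl((\phi^\sigma)^{-\sigma}\bigr) = \alpha(\sigma(\phi)) = \alpha\sigma(\phi)$. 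Consequently, $\alpha$ falsifies $(\lnot\phi^\sigma)^{-\sigma}$ if and only if $\alpha\sigma(\phi) = \top$, so it suffices to show $\alpha\sigma(\phi) = \top$.

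To establish this, I would trace the two substitutions that together turn $\phi$ into the closed formula $\tau(\phi^\alpha)$. Instantiating $\phi$ by the universal assignment $\alpha$ replaces each universal $u$ by the constant $\alpha(u)$ and each existential $e$ by an annotated variable $e^\alpha$; since distinct existential variables carry distinct base names, $e\mapsto e^\alpha$ is a bijection from $E$ onto the annotated variables $E^\alpha$ of $\phi^\alpha$. By the definition of $\sigma=\tau^{-\alpha}$ we have $\sigma(e)=\tau(e^\alpha)$ for every $e\in E$, so applying $\tau$ to $\phi^\alpha$ substitutes each $e^\alpha$ by $\sigma(e)$. Composing the two steps, the net effect on $\phi$ maps each universal $u$ to $\alpha(u)$ and each existential $e$ to $\sigma(e)$, which is exactly the composite assignment $\alpha\sigma$ (well defined because $U$ and $E$ are disjoint). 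A straightforward induction on the structure of $\phi$ then gives $\tau(\phi^\alpha)=\alpha\sigma(\phi)$. Since $\tau$ is a satisfying assignment of $\phi^\alpha$, the left-hand side equals $\top$, hence $\alpha\sigma(\phi)=\top$, completing the argument.

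The step I expect to require the most care is the bookkeeping of annotations behind $\tau(\phi^\alpha)=\alpha\sigma(\phi)$: one must verify that the annotation attached to each existential variable during instantiation by $\alpha$ is precisely the index read back by the operator $(\cdot)^{-\alpha}$, so that $\tau(e^\alpha)$ and $\sigma(e)$ genuinely denote the same value. Once this correspondence is pinned down from the definitions of instantiation and de-annotation, the rest is routine propositional reasoning, and the appeal to Lemma~\ref{lem:eq} in the first step closes the proof.
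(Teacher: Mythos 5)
Your proof is correct and follows essentially the same route as the paper's: both arguments hinge on the correspondence between $\tau$ satisfying $\phi^\alpha$ and the composite assignment $\alpha\sigma$ satisfying $\phi$, and both then invoke Lemma~\ref{lem:eq} to identify $\sigma(\phi)$ (resp.\ $\sigma(\lnot\phi)$) with the de-annotated instantiation before evaluating under $\alpha$. The only difference is presentational: you spell out via structural induction the annotation bookkeeping behind $\tau(\phi^\alpha)=\alpha\sigma(\phi)$, a step the paper asserts in a single sentence.
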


\begin{proof}
Since $\phi^\alpha$ is satisfied by $\tau$, $\phi$ is satisfied by 
the composite assignment $\alpha\tau^{-\alpha} = \alpha\sigma$, and therefore
 $\lnot\phi$ is falsified 
by $\alpha\sigma$. 
Then $\alpha$ falsifies $\sigma(\lnot\phi)$. 
According to Lemma~\ref{lem:eq} $\sigma(\lnot\phi)$ is equivalent to 
$(\lnot\phi^\sigma)^{-\sigma}$. Then $\alpha$ also falsifies  $(\lnot\phi^\sigma)^{-\sigma}$. 
\end{proof}

In the next round of the algorithm, the propositional formula 
$\phi^{\alpha_0} \land \phi^{\alpha_1}$ is handed over to a SAT solver. 
If this formula is unsatisfiable, $\Pi.\phi$ is false and 
the algorithm returns. 
Otherwise, it is satisfiable under some assignment \(\tau\colon E^{\alpha_0} \cup E^{\alpha_1} \to \{\top,\bot\}\), then at least one 
new 
assignment $\sigma_2\colon E\to \{\top,\bot\}$ 
with  $\sigma_2 \neq \sigma_1$
can be extracted from $\tau|_{E^{\alpha_i}}$ with $0 \leq i \leq 1$. 
This assignment is 
then used for obtaining a new propositional formula $\phi^{\sigma_1} \lor \phi^{\sigma_2}$. To show the validity of this formula,  its negation 
is passed to a SAT solver. If this formula is unsatisfiable, 
$\Pi.\phi$ is true and the algorithm returns. 
Otherwise, it is satisfiable under the 
assignment \(\rho\colon U^{\sigma_1} \cup U^{\sigma_2} \to \{\top,\bot\}\). 
A new assignment $\alpha_2\colon U \to \{\top, \bot\}$ with 
$\alpha_2 \neq \alpha_1 \neq \alpha_0$
is obtained 
 from $\rho|_{A^{\sigma_i}}$ with $1 \leq i \leq 2$.
This
 assignment is then used in 
the next round of the algorithm. In this way, the 
propositional formulas $\bigwedge_{\alpha\in\Sigma_U}\phi^\alpha$ and 
$\bigvee_{\sigma\in\Sigma_E}\phi^\sigma$ are generated. 
If $\bigwedge_{\alpha\in A}\phi^\alpha$ is unsatisfiable 
for some $A \subseteq \Sigma_U$, by Lemma~\ref{lem:unsat} $\Pi.\phi$ is false. Dually, if 
$\bigvee_{\sigma\in S}\phi^\sigma$ is valid for some $S \subseteq \Sigma_E$, by Lemma~\ref{lem:sat} $\Pi.\phi$ is true. 
The algorithm iteratively extends the sets $A$ and $S$ by adding 
parts of satisfying assignments of $\phi$ to $S$ and parts of 
falsifying assignments to $A$. In particular, $A$ is extended by 
assignments of the universal variables and $S$ is extended 
by assignments of the existential variables. 
The order in which assignments are considered depends on the  used SAT solver. 

\begin{example}
\label{ex:short}
We show how to solve the QBF 
$\forall a\exists x\forall b \exists y.\phi$ with $E=\{x,y\}$, $U=\{a,b\}$, and $\phi= 
((a \lor x \lor y) \land (\lnot a \lor \lnot x \lor y) \land (b \lor \lnot y))$ with the 
algorithm presented above. This formula can be solved in two iterations:

\underline{Init:} We start with some random assignment 
$\alpha_0\colon U \to \{\top,\bot\}$, for example with $\alpha_0(a) = \top$ and $\alpha_0(b) = \bot$.  
 
\underline{Iteration 1:} The formula $\phi^{\alpha_0} = (\lnot x^\top \lor y^{\top\bot}) 
\land \lnot y^{\top\bot}$ is passed to a SAT solver and found satisfiable  
under the assignment
$\tau \colon E^{\alpha_0}\to \{\top,\bot\}$ with $\tau(x^\top) = \bot$ and $\tau(y^{\top\bot}) = \bot$. By removing the variable annotations we get assignment 
$\sigma_1  = (\tau|_{E^{\alpha_0}})^{-\alpha_0}$, where 
$\sigma_1\colon E\to \{\top,\bot\}$ with
$\sigma_1(x) = \bot$ and $\sigma_1(y) = \bot$. Based on this assignment
we obtain $\phi^{\sigma_1} = a$. The formula $\lnot \phi^{\sigma_1}$
 is passed to a SAT solver. It is satisfiable and has the satisfying assignment 
$\rho \colon U^{\sigma_1}\to \{\top,\bot\}$ with $\rho(a) = \bot$ and $\rho(b^\bot) = \top$, which we then reduce to
$\alpha_1 = (\rho|_{U^{\sigma_1}})^{-\sigma_1}$ where $\alpha_1\colon U\to \{\top,\bot\}$ with
$\alpha_1(a) = \bot$ and $\alpha_1(b) = \top$. 

\underline{Iteration 2:} The formula $\phi^{\alpha_0} \land \phi^{\alpha_1}
= (\lnot x^\top \lor y^{\top\bot}) \land \lnot y^{\top\bot}
\land (x^\bot \lor y^{\bot\top})
$ is passed to a SAT solver in the second iteration. It is satisfiable 
and one satisfying assignment is $\tau\colon E^{\alpha_0} \cup E^{\alpha_1} 
\to \{\top,\bot\}$ with $\tau(x^\top) = \bot, \tau(x^\bot) = \top, 
\tau(y^{\top\bot}) = \bot, \tau(y^{\bot\top}) = \bot$. From $\tau$, 
we can extract the assignment $\sigma_2  = (\tau|_{E^{\alpha_1}})^{-\alpha_1}$ where $\sigma_2\colon E\to \{\top,\bot\}$
with $\sigma_2(x) = \top$ and $\sigma_2(y) = \bot$. Note 
that for any choice of $\tau$, $\sigma_2 \neq \sigma_1$.  
Next, we 
construct $\phi^{\sigma_1} \lor \phi^{\sigma_2} = a \lor \neg a$. 
This formula is a tautology, so its negation 
that is passed to a SAT solver
is unsatisfiable, hence $\Pi.\phi$ is true.  
\end{example}

\begin{figure}
\centering
\begin{tikzpicture}[square/.style={regular polygon,regular polygon sides=4},-,
level distance = 0.55cm, font=\small] 
\tikzstyle{level 1}=[sibling distance=15mm] 
\tikzstyle{level 2}=[sibling distance=15mm] 
\tikzstyle{level 3}=[sibling distance=8mm]
\tikzstyle{level 4}=[sibling distance=8mm]

\node (A0) [univ, label=right:$a$] {}
    child[missing] { node [exis,label=right:$x^{\ts\top}$] {}
        child[dotted] {node [solid,univ,label=right:$b$] {}
            child[solid] {node [exis,label=right:$y^{\ts{\top\top}}$] {}
                child[dotted] {node [leaf,label=below:$\top$] {}}}
            child[solid] {node [exis,label=right:$y^{\ts{\top\bot}}$] {}
                child[dotted] {node [leaf,label=below:$\top$] {}}}
        }
    }
    child[solid]{ node [exis,label=right:$x^{\ts\bot}$] {}
        child[dotted] {node [solid,univ,label=right:$b$] {}
            child[missing] {node [exis,label=right:$y^{\ts{\bot\top}}$] {}
                child[dotted] {node [leaf,label=below:$\top$] {}}}
            child[solid] {node [exis,label=right:$y^{\ts{\bot\bot}}$] {}
                child[dotted] {node [leaf,label=below:$\alpha_0\sigma_1$] {}}}
        }
    }
;
\node (A1) [univ, below=\treedistv of A0, label=right:$a$] {}
    child[missing] { node [exis,label=right:$x^{\ts\top}$] {}
        child[dotted] {node [solid,univ,label=right:$b$] {}
            child[solid] {node [exis,label=right:$y^{\ts{\top\top}}$] {}
                child[dotted] {node [leaf,label=below:$\top$] {}}}
            child[solid] {node [exis,label=right:$y^{\ts{\top\bot}}$] {}
                child[dotted] {node [leaf,label=below:$\top$] {}}}
        }
    }
    child[solid]{ node [exis,label=right:$x^{\ts\bot}$] {}
        child[dotted] {node [solid,univ,label=right:$b$] {}
            child[solid] {node [exis,label=right:$y^{\ts{\bot\top}}$] {}
                child[dotted] {node [leaf,label=below:$\alpha_1\sigma_2$] {}}}
            child[solid] {node [exis,label=right:$y^{\ts{\bot\bot}}$] {}
                child[dotted] {node [leaf,label=below:$\alpha_0\sigma_1$] {}}}
        }
    }
;
\node (S1) [univ, right=\treedisth of A0, label=right:$a$] {}
    child[dotted] { node [solid,exis,label=right:$x$] {}
        child[solid] {node [univ,label=right:$b^{\ts\top}$] {}
            child[dotted] {node [solid,exis,label=right:$y$] {}
                child[missing] {node [leaf,label=below:$\bot$] {}}
                child[solid] {node [leaf,label=below:$\alpha_1\sigma_1$] {}}}
        }
        child[missing] {node [univ,label=right:$b^{\ts\bot}$] {}
            child[dotted] {node [solid,exis,label=right:$y$] {}
                child[solid] {node [leaf,label=below:$\bot$] {}}
                child[solid] {node [leaf,label=below:$\bot$] {}}}
        }
    }
;

\node (A2) [univ, below=\treedistv of A1, label=right:$a$] {}
    child{ node [exis,label=right:$x^{\ts\top}$] {}
        child[dotted] {node [solid,univ,label=right:$b$] {}
            child[solid] {node [exis,label=right:$y^{\ts{\top\top}}$] {}
                child[dotted] {node [leaf,label=below:$\alpha_2\sigma_3$] {}}}
            child[missing] {node [exis,label=right:$y^{\ts{\top\bot}}$] {}
                child[dotted] {node [leaf,label=below:$\top$] {}}}
        }
    }
    child[solid]{ node [exis,label=right:$x^{\ts\bot}$] {}
        child[dotted] {node [solid,univ,label=right:$b$] {}
            child[solid] {node [exis,label=right:$y^{\ts{\bot\top}}$] {}
                child[dotted] {node [leaf,label=below:$\alpha_1\sigma_2$] {}}}
            child[solid] {node [exis,label=right:$y^{\ts{\bot\bot}}$] {}
                child[dotted] {node [leaf,label=below:$\alpha_0\sigma_1$] {}}}
        }
    }
;
\node (S2) [univ, right=\treedisth of A1, label=right:$a$] {}
    child[dotted] { node [solid,exis,label=right:$x$] {}
        child[solid] {node [univ,label=right:$b^{\ts\top}$] {}
            child[dotted] {node [solid,exis,label=right:$y$] {}
                child[solid] {node [leaf,label=below:$\alpha_2\sigma_2$] {}}
                child[solid] {node [leaf,label=below:$\alpha_2\sigma_1$] {}}}
        }
        child[missing] {node [univ,label=right:$b^{\ts\bot}$] {}
            child[dotted] {node [solid,exis,label=right:$y$] {}
                child[solid] {node [leaf,label=below:$\bot$] {}}
                child[solid] {node [leaf,label=below:$\bot$] {}}}
        }
    }
;
\node (A3) [univ, below=\treedistv of A2, label=right:$a$] {}
    child{ node [exis,label=right:$x^{\ts\top}$] {}
        child[dotted] {node [solid,univ,label=right:$b$] {}
            child[solid] {node [exis,label=right:$y^{\ts{\top\top}}$] {}
                child[dotted] {node [leaf,label=below:] {}}}
            child[solid] {node [exis,label=right:$y^{\ts{\top\bot}}$] {}
                child[dotted] {node [leaf,label=below:] {\footnotesize \Lightning}}}
        }
    }
    child[solid]{ node [exis,label=right:$x^{\ts\bot}$] {}
        child[dotted] {node [solid,univ,label=right:$b$] {}
            child[solid] {node [exis,label=right:$y^{\ts{\bot\top}}$] {}
                child[dotted] {node [leaf,label=below:] {}}}
            child[solid] {node [exis,label=right:$y^{\ts{\bot\bot}}$] {}
                child[dotted] {node [leaf,label=below:] {}}}
        }
    }
;
\node (S3) [univ, right=\treedisth of A2, label=right:$a$] {}
    child[dotted] { node [solid,exis,label=right:$x$] {}
        child[solid] {node [univ,label=right:$b^{\ts\top}$] {}
            child[dotted] {node [solid,exis,label=right:$y$] {}
                child[solid] {node [leaf,label=below:$\alpha_2\sigma_2$] {}}
                child[solid] {node [leaf,label=below:$\alpha_2\sigma_1$] {}}}
        }
        child[solid] {node [univ,label=right:$b^{\ts\bot}$] {}
            child[dotted] {node [solid,exis,label=right:$y$] {}
                child[missing] {node [leaf,label=below:$\bot$] {}}
                child[solid] {node [leaf,label=below:$\alpha_3\sigma_3$] {}}}
        }
    }
;

\node(Legend00) [univ, below left=\treedisth+3mm and 4mm of S3, label=right:{$v$}] {};
\node(Legend01) [leaf, below right=0.4cm and 0.4cm of Legend00] {};

\node(Legend10) [univ, below=3mm of Legend01, label=left:{$v$}] {};
\node(Legend11) [leaf, below left=0.4cm and 0.4cm of Legend10] {};

\node(Legend20) [univ, below right=3mm and 2mm of Legend11, label=left:{$v$}] {};
\node(Legend21) [leaf, below=0.4cm of Legend20] {};

\node(L00) [leaf, above right=0.75cm and \treedisth+18mm of A0] {};
\node(L01) [leaf, above left=0.75cm and 15mm of A0] {};

\node(L10) [leaf, above right=0.5cm and \treedisth+18mm of A1] {};
\node(L11) [leaf, above left=0.5cm and 15mm of A1] {};

\node(L20) [leaf, above right=0.5cm and \treedisth+18mm of A2] {};
\node(L21) [leaf, above left=0.5cm and 15mm of A2] {};

\node(L30) [leaf, above right=0.5cm and \treedisth+18mm of A3] {};
\node(L31) [leaf, above left=0.5cm and 15mm of A3] {};

\draw[densely dashed] (L00) -- (L01) node[pos=0.25, label=above:{\scriptsize 
Counter-Models of  $\phi$}]{};

\draw[densely dashed] (L00) -- (L01) node[pos=0.75, label=above:{\scriptsize Models of $\phi$}]{};

\draw[densely dashed] (L00) -- (L01) node[pos=0.5, label=below:{\scriptsize Iteration 1}]{};
\draw[densely dashed] (L10) -- (L11) node[pos=0.5, label=below:{\scriptsize Iteration 2}]{};
\draw[densely dashed] (L20) -- (L21) node[pos=0.5, label=below:{\scriptsize Iteration 3}]{};
\draw[densely dashed] (L30) -- (L31) node[pos=0.5, label=below:{\scriptsize Iteration 4}]{};

\node [above right= 5mm and 1.4cm of L11, label=below:{$\sigma_1(x)=\top, \sigma_1(y)=\bot$}] {};
\node [above right= 5mm and 1.4cm of L21, label=below:{$\sigma_2(x)=\top, \sigma_2(y)=\top$}] {};
\node [above right= 5mm and 1.4cm of L31, label=below:{$\sigma_3(x)=\bot, \sigma_3(y)=\bot$}] {};
\node [above left= 5mm and 1.4cm of L10, label=below:{$\alpha_1(a)=\bot, \alpha_1(b)=\top$}] {};
\node [above left= 5mm and 1.4cm of L20, label=below:{$\alpha_2(a)=\top, \alpha_2(b)=\top$}] {};
\node [above left= 5mm and 1.4cm of L30, label=below:{$\alpha_3(a)=\top, \alpha_3(b)=\bot$}] {};

\draw[solid] (Legend00) -- (Legend01) node[pos=0.5, label=right:{\scriptsize \ \ $v$ set to $\bot$}]{};
\draw[solid] (Legend10) -- (Legend11) node[pos=0.5, label=right:{\scriptsize \ \ $v$ set to $\top$}]{};
\draw[dotted] (Legend20) -- (Legend21) node[pos=0.5, label=right:{\scriptsize \ \ $v$ unassigned}]{};

\end{tikzpicture}
\caption{Expansion trees relating the assignments found during solving 
the QBF $\forall a\exists x \forall b \exists y.\phi$ in Example~\ref{ex:long},
with initial assignment $\alpha_0(a) = \bot, \alpha_0(b) = \bot$.
The assignments shown in the leaves of the trees satisfy (left trees) 
or falsify (right trees) $\phi$.}
\label{fig:exp-trees}
\end{figure}

The soundness of our algorithm immediately follows from 
Lemmas~\ref{lem:unsat} and~\ref{lem:sat}: the algorithm returns false
(true) if, in some iteration $i$, it finds that the current partial 
expansion $\bigwedge_{\alpha \in A_{i-1}}\phi^\alpha$ 
(respectively $\bigwedge_{\sigma \in S_i}\lnot\phi^\sigma$)
is unsatisfiable. 

\begin{theorem}
The algorithm shown in Figure~\ref{alg:basic} is sound.
\end{theorem}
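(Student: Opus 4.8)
The plan is to argue directly that each of the two places where the algorithm terminates yields the correct answer, reducing both to Lemmas~\ref{lem:unsat} and~\ref{lem:sat}. The algorithm returns only at Line~\ref{alg:retfa} and Line~\ref{alg:rettr}. It returns \emph{false} exactly when the SAT call in Line~\ref{alg:basic:sat-a} reports $\bigwedge_{\alpha \in A_{i-1}}\phi^\alpha$ unsatisfiable, and \emph{true} exactly when the SAT call in Line~\ref{alg:basic:sat-s} reports $\bigwedge_{\sigma \in S_i}\lnot\phi^\sigma$ unsatisfiable, i.e.\ when $\bigvee_{\sigma \in S_i}\phi^\sigma$ is valid. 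So it suffices to show that in the first case $\Pi.\phi$ is false and in the second it is true.

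The real content of the proof is that the two lemmas are applicable, since Lemma~\ref{lem:unsat} needs a set $A \subseteq \Sigma_U$ of \emph{full} universal assignments and Lemma~\ref{lem:sat} a set $S \subseteq \Sigma_E$ of \emph{full} existential assignments. I would therefore establish, by induction on $i$, the loop invariant
\[
A_i \subseteq \Sigma_U \quad\text{and}\quad S_i \subseteq \Sigma_E .
\]
The base case is immediate: $A_0 = \{\alpha_0\}$ with $\alpha_0\colon U \to \{\top,\bot\}$ full by Line~\ref{alg:basic:init}, and $S_0 = \emptyset$. For the step I would inspect the two update lines. In Line~\ref{alg:basic:upd-s} each $\alpha \in A_{i-1}$ is full over $U$ by hypothesis, so in $\phi^\alpha$ every occurring existential variable is replaced by a fully annotated copy $e^\omega$, its annotation $\omega$ being the complete preceding universal assignment; the model $\tau$ assigns these annotated existentials, and $(\tau|_{E^\alpha})^{-\alpha}$ strips the annotations to yield an assignment in $\Sigma_E$. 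The argument for Line~\ref{alg:basic:upd-a} is symmetric: each $\sigma \in S_i$ is full over $E$, so $\lnot\phi^\sigma$ contains only fully annotated universal variables, the model $\rho$ assigns them, and $(\rho|_{U^\sigma})^{-\sigma}$ lies in $\Sigma_U$.

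With the invariant in hand the conclusion follows at once. If the algorithm returns false, then $A_{i-1} \subseteq \Sigma_U$ and $\bigwedge_{\alpha \in A_{i-1}}\phi^\alpha$ is unsatisfiable, so Lemma~\ref{lem:unsat} gives that $\Pi.\phi$ is false; if it returns true, then $S_i \subseteq \Sigma_E$ and $\bigvee_{\sigma \in S_i}\phi^\sigma$ is valid, so Lemma~\ref{lem:sat} gives that $\Pi.\phi$ is true. The one delicate point, and the main obstacle, lies in the invariant itself: I claimed the de-annotated restrictions are \emph{total} on $E$ (resp.\ $U$), but if some existential (resp.\ universal) variable fails to occur in a given instantiation, the SAT model need not assign its annotated copy. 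I would close this gap by assigning any such unconstrained variable an arbitrary fixed value, which keeps the result in $\Sigma_E$ (resp.\ $\Sigma_U$) and affects neither the satisfiability of the instantiated formulas nor the applicability of the lemmas.
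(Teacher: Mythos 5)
Your proof is correct and takes essentially the same route as the paper, which derives soundness immediately from Lemmas~\ref{lem:unsat} and~\ref{lem:sat} by observing that the algorithm only returns false (true) when the corresponding partial expansion is unsatisfiable (respectively, when its negation is unsatisfiable, i.e.\ the disjunctive expansion is valid). Your additional invariant that $A_i \subseteq \Sigma_U$ and $S_i \subseteq \Sigma_E$, including the fix of extending models arbitrarily on annotated variables that do not occur in the instantiation, is a legitimate detail that the paper leaves implicit.
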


For showing that the algorithm also terminates, we argue that 
sets $A_i$ and $S_i$ increase in iteration $i+1$. To this end, 
we have to relate the variables of the QBF, the annotated variables 
as well as their assignments. Before we give the proof, we first 
consider another example in which we illustrate how the different 
assignments are related. 

\begin{example}
\label{ex:long}
We show one possible run of the algorithm presented above 
for the QBF $\Phi := \forall a\exists x\forall b \exists y.\phi$ with 
$$\phi := (a \land b \land \lnot x \land \lnot y) \lor (\lnot a \land x \land
(b \leftrightarrow y))$$
and how it iteratively generates the sets $\Sigma_U$ 
and $\Sigma_E$.
Figure~\ref{fig:exp-trees} shows the expansion trees that are implicitly 
built during the search. An expansion tree relates the variables 
of the partial expansion of $\Phi$ constructed from $A_i$ (left column)
and $S_i$ (right column). Solid edges indicate that the variable on the 
top has been set by an assignment from $A_i$ or $S_i$, and dotted edges indicate that the 
variable has to be assigned a value by the SAT solver. The order of the 
(annotated) variables in the expansion tree respects the order of the 
(original) variables in the prefix. 

\underline{Init}: For the initialisation of $A_0$,  an arbitrary assignment $\alpha_0\colon U \to \{\top,\bot\}$ is chosen. 
Let $\alpha_0(a) = \bot$ and $\alpha_0(b) = \bot$.

\underline{Iteration 1:} 
$\phi^{\alpha_0} := x^\bot \land \lnot y^{\bot\bot}$ is 
satisfiable. Assignment $\sigma_1\colon E \to \{\top,\bot\}$, with $\sigma_1(x) = \top$ and $\sigma_1(y) = \bot$, is extracted from model $\tau \colon E^{\alpha_1} \to \{\top,\bot\}$ and added to $S_1$.
Now $\phi^{\sigma_1} := \lnot a \land \lnot b^{\top}$ is checked 
for validity. Assignment $\alpha_1\colon U \to \{\top,
\bot\}$, with $\alpha_1(a) = \bot$ and $\alpha_1(b) = \top$,  
obtained from counter-example $\rho \colon U^{\sigma_1} \to \{
\top,\bot\} $ is added to $A_1$.

\underline{Iteration 2}: Next, $\phi^{\alpha_0} \land \phi^{\alpha_1}$ 
with $\phi^{\alpha_1} := x^\bot \land y^{\bot\top}$
is checked. From model $\tau \colon E^{\alpha_0} \cup E^{\alpha_1} \to \{\top,\bot\}$, again $\sigma_1$
can be extracted for $\phi^{\alpha_0}$. For $\phi^{\alpha_1}$ a new assignment 
$\sigma_2$ which is not in $S_1$ is found and added to $S_2$. In particular, we get $\sigma_2\colon E \to \{\top,\bot\}$ with $\sigma_2(x) = \top$ and $\sigma_2(y) = \top$. When the validity of $\phi^{\sigma_1} \lor \phi^{\sigma_2}$ 
with $ \phi^{\sigma_2} :=  \lnot a \land b^{\top}$
is checked, we get a counter-example $\rho \colon U^{\sigma_1} \cup U^{\sigma_2} \to \{\top,\bot\}$, from which $\alpha_2 \colon U \to \{\top,\bot\}$, with $\alpha_2(a) = \top$ and $\alpha_2(b) = \top$, can be extracted. Assignment $\alpha_2$ is added to $A_2$ leading to a new path in the left expansion tree (Iteration 3 in Figure~\ref{fig:exp-trees}).

\underline{Iteration 3:} Next, $\phi^{\alpha_0} \land \phi^{\alpha_1} \land \phi^{\alpha_2}$ 
with $ \phi^{\alpha_2} := \lnot x^\top \land \lnot y^{\top\top}$
is checked. 
From model $\tau \colon E^{\alpha_0} \cup E^{\alpha_1} \cup E^{\alpha_2} \to \{\top,\bot\}$, $\sigma_3 \colon E \to \{\top,\bot\}$ is extracted, satisfying $\phi^{\alpha_2}$. This assignment is different from both $\sigma_1$ and $\sigma_2$:
$\sigma_3(x) = \bot$ and $\sigma_3(y) = \bot$. This again results in a new branch of the expansion tree (see left expansion tree of Iteration~4 in Figure~\ref{fig:exp-trees}). The resulting formula $\phi^{\sigma_1} \lor \phi^{\sigma_2} \lor \phi^{\sigma_3}$ with $\phi^{\sigma_3} := a \land b^\bot$ is not valid, and from the counter-example $\rho \colon U^{\sigma_1} \cup U^{\sigma_2} \cup U^{\sigma_3} \to \{\top, \bot\}$ we get $\alpha_3 \colon U \to \{\top,\bot\}$ with $\alpha_3(a) = \top$ and $\alpha_3(b) = \bot$. 

\underline{Iteration 4:} Finally, the full 
expansion $\phi^{\alpha_0} \land \phi^{\alpha_1} \land \phi^{\alpha_2} \land \phi^{\alpha_3}$ with $\phi^{\alpha_3} := \bot$  is not satisfiable, 
meaning that the original formula 
$\forall a\exists x\forall b \exists y.\phi$ is false.
\end{example}

In the example above we saw that new assignments are generated in each iteration  because  $A_i$ and $S_i$ build models and counter-models 
of $\phi$. The following definition formalises the relationship between 
$A_i$ and $S_i$. 

\begin{definition}
Let $\Pi.\phi$ be a QBF over universally quantified variables $U$ 
and existentially quantified variables $E$. Further, let 
$A \subseteq \{ \alpha \mid \alpha \colon U \mapsto \{\top, \bot\}·\}$
and 
$S \subseteq \{ \sigma \mid \sigma \colon E \mapsto \{\top, \bot\}·\}$. 
If for every assignment $\sigma \in S$, 
there exists an assignment $\alpha \in A$ such that $\alpha\sigma(\lnot\phi)$ 
is true, then we say that $A$ \emph{completes} $S$. 
If for every assignment $\alpha \in A$, 
there exists an assignment $\sigma \in S$ such that $\alpha\sigma(\phi)$ 
is true, then we say that $S$ \emph{completes} $A$. 

\end{definition}

We now show that $S_{i}$ completes $A_{i-1}$ and $A_{i}$ completes $S_{i}$ if the algorithm does not terminate in iteration $i$ because of the 
unsatisfiability of the respective expansion.

\begin{lemma}\label{lem:compl}
Let $\Pi.\phi$ be a QBF over universally quantified variables $U$ 
and existentially quantified variables $E$. Further, let $A_{i-1}$ 
and $A_i$ with $A_{i-1} \subseteq A_i$ be 
two sets of full assignments to the universal variables and let $S_{i}$ be 
a set of full assignments to the existential variables 
obtained by  iteration $i$ during an execution of the algorithm 
shown 
in Figure~\ref{alg:basic}.

(1) If $\bigwedge_{\alpha \in A_{i-1}}\phi^\alpha$ is satisfiable, then
$S_{i}$ completes $A_{i-1}$, i.e., 
for every $\mu \in A_{i-1}$, there is an 
assignment $\nu \in S_i$ such that 
$\mu\nu(\phi)$ is true.

(2) If
$\bigwedge_{\sigma \in S_i}\lnot\phi^\sigma$ is satisfiable,
then $A_i$ completes $S_i$, i.e.,  
 for every $\nu \in S_i$, there is an 
assignment
$\mu \in A_{i}$ such that $\nu\mu(\lnot\phi)$ is true.
\end{lemma}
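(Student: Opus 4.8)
The plan is to prove both parts by directly inspecting the updates performed by the algorithm in iteration $i$, and then verifying the required satisfaction using a single ``bridge'' fact connecting an instantiation $\phi^\alpha$ with the behaviour of $\phi$ under a composite assignment. The bridge fact I would isolate is: for a full universal assignment $\alpha$ and an assignment $\tau$ of the annotated variables $E^\alpha$, if $\tau$ satisfies $\phi^\alpha$ then the composite assignment $\alpha\,\tau^{-\alpha}$ satisfies $\phi$; dually, if an assignment $\rho$ of $U^\sigma$ satisfies $\lnot\phi^\sigma$ then $\sigma\,\rho^{-\sigma}$ satisfies $\lnot\phi$. This is exactly the reasoning already used inside the proof of Lemma~\ref{lem:contra}, and it follows from Lemma~\ref{lem:eq}: stripping annotations turns $\tau(\phi^\alpha)$ into $\tau^{-\alpha}\big((\phi^\alpha)^{-\alpha}\big)$ by a consistent renaming, and since $(\phi^\alpha)^{-\alpha}$ is equivalent to $\alpha(\phi)$, we obtain $\tau(\phi^\alpha) = \alpha\,\tau^{-\alpha}(\phi)$.

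For part~(1), I would assume $\bigwedge_{\alpha\in A_{i-1}}\phi^\alpha$ is satisfiable, so the call on Line~\ref{alg:basic:sat-a} returns a satisfying assignment $\tau$. Fix an arbitrary $\mu \in A_{i-1}$ and take $\nu := (\tau|_{E^\mu})^{-\mu}$; by Line~\ref{alg:basic:upd-s} this $\nu$ is a member of $S_i$. Since $\tau$ satisfies the whole conjunction, its restriction $\tau|_{E^\mu}$ satisfies the conjunct $\phi^\mu$ (whose variables all lie in $E^\mu$), and the bridge fact then yields $\mu\nu(\phi)$ true, which is precisely what ``$S_i$ completes $A_{i-1}$'' requires. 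For part~(2) I would dualise: assuming $\bigwedge_{\sigma\in S_i}\lnot\phi^\sigma$ is satisfiable, the call on Line~\ref{alg:basic:sat-s} returns $\rho$; fixing $\nu \in S_i$ and setting $\mu := (\rho|_{U^\nu})^{-\nu}$ (added to $A_i$ on Line~\ref{alg:basic:upd-a}), the fact that $\rho|_{U^\nu}$ satisfies $\lnot\phi^\nu$ together with the dual bridge fact gives $\nu\mu(\lnot\phi)$ true, so $A_i$ completes $S_i$. This step is essentially a restatement of Lemma~\ref{lem:contra} with the roles of the two solvers exchanged.

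The main obstacle I anticipate is purely bookkeeping: making the bridge fact fully rigorous requires care with the annotation map. One must argue that applying $\tau$ to $\phi^\alpha$ coincides with applying the de-annotated assignment $\tau^{-\alpha}$ to $(\phi^\alpha)^{-\alpha}$, which hinges on every occurrence of a given existential variable $x$ receiving the \emph{same} annotation $\omega$ once $\alpha$ is fixed (so that $x$ collapses to a single annotated variable $x^\omega$ that $\tau$ assigns consistently). A secondary subtlety is that $\tau|_{E^\mu}$ need not mention existential variables that propositional simplification removed from $\phi^\mu$, so $\nu$ may be partial; this is harmless, since a formula evaluating to $\top$ under a partial assignment remains $\top$ under every full extension, allowing $\nu$ to be completed to a full assignment in $S_i$ without affecting $\mu\nu(\phi)$, and symmetrically for $\mu$ in part~(2).
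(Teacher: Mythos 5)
Your proposal is correct and is essentially the paper's own argument: both rest on taking the satisfying assignment $\tau$ (resp.\ $\rho$) from the SAT call, restricting it to $E^\mu$ (resp.\ $U^\nu$), de-annotating, observing that the result lies in $S_i$ (resp.\ $A_i$) by the update in Line~\ref{alg:basic:upd-s} (resp.\ Line~\ref{alg:basic:upd-a}), and invoking the Lemma~\ref{lem:eq}/Lemma~\ref{lem:contra}-style ``bridge'' that a satisfying assignment of $\phi^\mu$ yields a composite assignment satisfying $\phi$. The only difference is presentational --- the paper wraps this construction in a proof by contradiction while you state it directly --- and your added care about annotation bookkeeping and possibly partial assignments only makes the argument more explicit.
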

\begin{proof}
By contradiction. For (1), assume there is an assignment $\mu \in A_{i-1}$ 
such that  there is 
no assignment $\nu \in S_i$ with $\mu\nu(\phi)$ is true. 
By assumption  $\bigwedge_{\alpha \in A_{i-1}}\phi^\alpha$ is satisfiable, 
so there is a  satisfying assignment $\tau$ with 
$\tau |_{E^{\mu}}(\phi^{\mu})$ is true. 
Then also $\mu(\tau |_{E^{\mu}})^{-\mu}(\phi)$ 
is true. But 
$(\tau |_{E^{\mu}})^{-\mu} \in S_i$. 
For (2), assume that there is an assignment $\mu \in S_i$ 
such that there is no $\nu \in A_i$ with $\mu\nu(\lnot\phi)$ 
is true. The rest of the argument is similar as in (1). 
\end{proof}

Next, we show that the addition of new assignments $A'$ to 
a set $A$ of universal assignments forces a set $S$ of existential 
assignments to increase if some completion criteria hold. 

\begin{lemma}\label{lem:sinc}
Let $\Phi = \Pi.\phi$ be a QBF over universally quantified variables $U$ 
and existentially quantified variables $E$. Further, let $A \cup A'$ 
be a set of universal assignments such that $A \cap A' = \emptyset$
and $A' \not= \emptyset$. 
Let $S$ be a set of existential assignments
and assume that 
$\bigwedge_{\sigma \in S}\lnot\phi^\sigma$ has the satisfying assignment $\rho$,
$A' \subseteq \{(\rho |_{U^\sigma})^{-\sigma} \mid \sigma \in S\}$.

If $S$ completes $A$, and $A \cup A'$ completes $S$, and $\bigwedge_{\alpha \in A\cup A'}\phi^\alpha$ evaluates to true under assignment $\tau$, then there 
exists an assignment $\nu \in \{(\tau |_{E^\alpha})^{-\alpha} \mid \alpha \in A \cup A'\}$ with $\nu \not\in S$.  
\end{lemma}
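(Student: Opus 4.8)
The plan is to proceed by contradiction: I assume that \emph{every} existential assignment extracted in this iteration, $\nu_\alpha := (\tau|_{E^\alpha})^{-\alpha}$ for $\alpha \in A \cup A'$, already belongs to $S$, and I look for a single $\alpha$ whose $\nu_\alpha$ must be new. The natural place to look is the freshly added block $A'$. Since $A' \neq \emptyset$, I fix some $\alpha^* \in A'$, and from the hypothesis $A' \subseteq \{(\rho|_{U^\sigma})^{-\sigma} \mid \sigma \in S\}$ I obtain a \emph{generating} assignment $\sigma^* \in S$ with $\alpha^* = (\rho|_{U^{\sigma^*}})^{-\sigma^*}$.

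First I would extract the two facts about $\alpha^*$ that drive the argument. Because $\rho$ satisfies $\bigwedge_{\sigma \in S}\lnot\phi^\sigma$ it satisfies $\lnot\phi^{\sigma^*}$, so by the equivalence of $(\lnot\phi^{\sigma^*})^{-\sigma^*}$ and $\sigma^*(\lnot\phi)$ from Lemma~\ref{lem:eq} (the symmetric counterpart of Lemma~\ref{lem:contra}) the composite $\alpha^*\sigma^*(\phi)$ is \emph{false}. Dually, $\tau$ satisfies $\phi^{\alpha^*}$, so $\tau|_{E^{\alpha^*}}$ models $\phi^{\alpha^*}$ and, again by Lemma~\ref{lem:eq}, $\alpha^*\nu^*(\phi)$ is \emph{true}, writing $\nu^* = \nu_{\alpha^*}$. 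Since one and the same universal assignment $\alpha^*$ cannot both satisfy and falsify $\phi$ under the same existential extension, these two facts give $\nu^* \neq \sigma^*$ at once.

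The harder step, and the one I expect to be the main obstacle, is upgrading $\nu^* \neq \sigma^*$ to $\nu^* \notin S$. My plan is to use that $\rho$ is a \emph{single} assignment to the whole set of annotated universals $\bigcup_{\sigma \in S} U^\sigma$, so it defeats all of $S$ coherently. Concretely, suppose $\nu^* \in S$; then $\rho \models \lnot\phi^{\nu^*}$, so $\beta := (\rho|_{U^{\nu^*}})^{-\nu^*}$ satisfies $\beta\nu^*(\phi)$ false, and it suffices to show $\beta = \alpha^*$ in order to contradict $\alpha^*\nu^*(\phi)$ true. The equality $\beta = \alpha^*$ is an \emph{annotation-coherence} statement: for every universal $u$, the value $\rho$ gives to the copy of $u$ annotated via $\nu^*$ must agree with the value it gives to the copy annotated via $\sigma^*$. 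This is exactly where the prefix-respecting shape of the annotations (the expansion trees of Figure~\ref{fig:exp-trees}) is needed, since the annotation of $u$ records only the values assigned to the existential variables preceding $u$ in $\Pi$; I would prove the coherence by induction along $<_\Pi$, peeling off one quantifier block at a time.

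Finally, I anticipate that closing the coherence step cleanly is where the two completion hypotheses and the way $A$, $A'$, $S$ are produced by the algorithm have to be invoked, in the spirit of Lemma~\ref{lem:compl}: they are what pin each $\sigma \in S$ to a universal assignment it is paired with, and thereby force the annotated copies read off from $\rho$ to line up, so that the universal assignment obtained from $\rho$ relative to $\sigma^*$ coincides with the one obtained relative to any $\nu^* \in S$ that $\tau$ could return. Once this coherence is in place the contradiction is immediate, and the existence of some extracted $\nu \notin S$ follows. I would therefore isolate the coherence claim as a small auxiliary lemma about instantiation and the prefix order and then plug it into the contradiction above.
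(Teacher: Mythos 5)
Your opening deductions are correct and match the paper's Lemma~\ref{lem:contra}: from $\rho \models \lnot\phi^{\sigma^*}$ and $\tau \models \phi^{\alpha^*}$ you get that $\alpha^*\sigma^*(\phi)$ is false and $\alpha^*\nu^*(\phi)$ is true, hence $\nu^* \neq \sigma^*$. But the step you identify as the crux fails: the ``annotation-coherence'' claim $\beta = \alpha^*$, with $\beta = (\rho|_{U^{\nu^*}})^{-\nu^*}$, is \emph{false}, and for exactly the reason you hope makes it true. The annotation of a universal $u$ under an existential assignment records the values given to the existentials preceding $u$; since $\nu^* \neq \sigma^*$, the two assignments disagree on some existential $x$, and for every universal $u$ with $x <_\Pi u$ the copies of $u$ occurring in $\lnot\phi^{\sigma^*}$ and in $\lnot\phi^{\nu^*}$ are \emph{distinct} annotated variables, which the single assignment $\rho$ is free to set to opposite values. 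Worse, your own two facts already entail $\beta \neq \alpha^*$ whenever $\nu^* \in S$: $\alpha^*\nu^*(\phi)$ is true while $\beta\nu^*(\phi)$ is false. So no induction along $<_\Pi$ can establish the coherence you need; what you would be trying to prove is refuted rather than supported by the structure at hand.

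The gap is not merely technical, because your contradiction uses only the assumption ``$\nu_{\alpha^*} \in S$'' for the one fixed $\alpha^* \in A'$, so if it worked it would prove that the extraction of \emph{every} element of $A'$ is new --- and that statement is false. Take $\Pi = \exists x \forall b \exists y$ and $\phi = (\lnot x \land (b \lor \lnot y)) \lor (x \land \lnot b)$, with $S = \{\sigma_1, \sigma_2\}$ where $\sigma_1(x)=\top, \sigma_1(y)=\top$ and $\sigma_2(x)=\bot, \sigma_2(y)=\top$, and $A = \{\alpha_1\}$ where $\alpha_1(b)=\bot$. Then $\phi^{\sigma_1} = \lnot b^{\top}$ and $\phi^{\sigma_2} = b^{\bot}$, so $\rho(b^{\top})=\top, \rho(b^{\bot})=\bot$ is forced, giving $A' = \{\alpha^*\}$ with $\alpha^*(b)=\top$ (generated by $\sigma^* = \sigma_1$); all completion hypotheses of Lemma~\ref{lem:sinc} hold. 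Now $\phi^{\alpha_1} \land \phi^{\alpha^*} = (x \lor \lnot y^{\bot}) \land \lnot x$, and the SAT solver may legitimately return $\tau(x)=\bot$, $\tau(y^{\bot})=\bot$, $\tau(y^{\top})=\top$, whence $\nu_{\alpha^*} = \sigma_2 \in S$: the new assignment $\nu_{\alpha_1}$ (with $\nu_{\alpha_1}(x)=\bot, \nu_{\alpha_1}(y)=\bot$) is extracted from the \emph{old} element $\alpha_1 \in A$, not from $\alpha^*$, and indeed $\beta = \alpha_1 \neq \alpha^*$. Any correct proof must therefore locate the witness anywhere in $A \cup A'$, treating the sets jointly; this is what the paper does, by induction on the prefix itself: peel off the outermost quantifier, restrict $A$, $A'$, $S$ and the assignments to the branch fixed by $\rho(x)$ (in the $\forall$ case) or $\tau(x)$ (in the $\exists$ case), and apply the induction hypothesis to the shortened QBF, the $\exists$ case yielding $\nu(x) = \tau(x)$ and thus $\nu \notin S$.
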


\begin{proof}
By induction over the number of variables in $\Pi$. 

\emph{Base Case.} Assume that $\Phi$ has only one variable, i.e., 
$\Pi = Qx$. 
Note that $|A'| = 1$ because $x$ is outermost in the prefix and 
$A'$ is obtained from sub-assignments of $\rho$. 
If $Q = \forall$, then the elements of $A$ are full assignments of $\phi$, 
and $S$ is either empty, or it contains the empty assignment 
$\omega \colon \emptyset \mapsto \{\top, \bot\}$. 
Let 
$A' = \{\mu\}$.
If $S$ is empty, so is $A$
(because $S$ has to complete $A$). 
If $\tau$ is a satisfying assignment of $\phi^\mu$, then $\nu = \tau = \omega$ 
is the empty assignment and $\nu \not\in S$. 
Otherwise, $\omega \in S$. If there is an assignment $\alpha \in A$, 
then $\phi^\alpha \wedge \phi^\mu$ is a full expansion of $\Phi$. 
If this full expansion is true, then $\lnot \phi$ is unsatisfiable. 
Otherwise, $\phi^\alpha \wedge \phi^\mu$ is unsatisfiable. 
In both 
cases, the necessary preconditions for the lemma are not fulfilled. 
If $A = \emptyset$, then $\mu\omega(\lnot \phi)$ is true. Then 
$\phi^\mu$ is unsatisfiable, again violating a precondition. 
If $Q = \exists$, then $\mu = \omega$ and $A = \emptyset$.   
If $S = \emptyset$ and $\phi^\omega = \phi$ has the satisfying assignment 
$\tau$, then $\nu = \tau$ and $\nu \not\in S$. Otherwise, 
if there is an assignment 
$\sigma \in S$, then $\omega\sigma(\lnot \phi)$ is true, because 
$A \cup \{\mu\} = \{\omega\}$ completes $S$.
Hence, if assignment $\tau$ satisfies $\phi^\mu$, then $\nu = \tau$, so $\nu \not\in S$.  

\emph{Induction Step.} Assume the lemma holds for QBFs with $n$ variables. 
We show that it also holds for QBFs with $n+1$ variables. Let 
$\Phi = Qx\Pi.\phi$ be a QBF over existential variables $E$ and 
universal variables $U$ with $\Pi = Q_1x_1\ldots Q_nx_n$
and $A \cup A'$ and $S$ be as required ($S$ completes $A$, 
$A \cup A'$ completes $S$, $\bigwedge_{\alpha \in A \cup A'}\phi^\alpha$ 
has a satisfying assignment $\tau$, and 
$\bigwedge_{\sigma \in S}\lnot\phi^\sigma$ 
has a satisfying assignment $\rho$ from which $A'$ is obtained).

If $Q = \forall$, then all assignments $\alpha \in A'$ assign
the same value $t$ to $x$, i.e., $\alpha(x) = t$, because these 
assignments are extracted from assignment $\rho$ and since $x$ is the 
outermost variable of the prefix of $\Phi$, $\rho(x) = t$.
Further, let $A^t = \{ \alpha \in A \mid \alpha(x) = t\}$. It is easy to argue
that for $\Pi.\phi[x \leftarrow t]$ together with the assignment 
sets $A^t \cup A'$ and $S$ the induction hypothesis applies, i.e., there 
is an assignment $\nu \not\in S$ with 
$\nu \in \{(\tau' |_{E^\alpha})^{-\alpha} \mid \alpha \in A^t \cup A'\}$
where $\tau'$ is the part of $\tau$ that satisfies 
$\bigwedge_{\alpha \in A^t \cup A'}(\phi[x \leftarrow t])^\alpha$. 
Obviously, $\nu \in \{(\tau |_{E^\alpha})^{-\alpha} \mid \alpha \in A \cup A'\}$.

If $Q = \exists$, assume that $\tau(x) = t$. 
Let $\{ \sigma \in S \mid \sigma(x) = t\} \subseteq S^t \subseteq S$, 
and let $A^t \subseteq A$
such that the induction hypothesis applies to 
$\Pi.\phi[x \leftarrow t]$, $A^t \cup A'$, and $S^t$. Let
$\tau^t$ be those sub-assignments of $\tau$ 
that satisfy $\bigwedge_{\alpha \in A^t}\phi^\alpha$. Then there is
an assignment $\nu$ that can be extracted from $\tau^t$ with 
$\nu \not\in S^t$. Since $\nu(x) = t$, $\nu \not\in S$.
This concludes the proof. 
\end{proof}

This property also holds in the other direction, i.e., adding a set $S'$ of new assignments to $S$ will force the set $A$ to increase. 

\begin{lemma}\label{lem:ainc}
Let $\Phi = \Pi.\phi$ be a QBF over universally quantified variables $U$ 
and existentially quantified variables $E$. Further, let $S \cup S'$ 
be a set of existential assignments such that $S \cap S' = \emptyset$,
$S' \not= \emptyset$,
let $A$ be a set of universal assignments,
$\bigwedge_{\alpha \in A}\phi^\alpha$ has the satisfying assignment $\tau$,
$S' \subseteq \{(\tau |_{E^\alpha})^{-\alpha} \mid \alpha \in A\}$.

If $A$ completes $S$ and $S \cup S'$ completes $A$ and $\bigwedge_{\sigma \in S\cup S'}\lnot\phi^\sigma$ evaluates to true under assignment $\rho$, then there 
exists an assignment $\nu \in \{(\rho |_{U^\sigma})^{-\sigma} \mid \sigma \in S \cup S'\}$ with $\nu \not\in A$.  
\end{lemma}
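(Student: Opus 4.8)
The plan is to derive Lemma~\ref{lem:ainc} from Lemma~\ref{lem:sinc} by exploiting the fundamental duality between true and false QBFs, rather than repeating the inductive argument from scratch. First I would introduce the \emph{dual QBF} $\bar\Phi = \bar\Pi.\lnot\phi$, where $\bar\Pi$ arises from $\Pi$ by flipping every quantifier ($\forall \mapsto \exists$ and $\exists \mapsto \forall$) while keeping the variable order unchanged, so that the universal variables of $\bar\Pi$ are $E$ and its existential variables are $U$. Under this transformation the roles of universal and existential assignments, of $A$ and $S$, and of the two satisfiability checks are interchanged, and this is precisely the symmetry that turns Lemma~\ref{lem:sinc} into Lemma~\ref{lem:ainc}.

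The one technical fact I would establish up front is that instantiation essentially commutes with negation: for any assignment $\rho$, the formula $(\lnot\phi)^\rho$ is equivalent to $\lnot(\phi^\rho)$. This holds because the annotation attached to an unassigned variable depends only on the prefix order $<_\Pi$ and on the values of the preceding assigned variables, neither of which changes when the quantifier types are flipped; the two instantiations therefore expand the same matrix over the same annotated variables, one being the Boolean negation of the other. Consequently, expanding $\lnot\phi$ over a universal assignment of $\bar\Phi$ (an assignment of $E$) gives $\lnot\phi^\sigma$, expanding it over an existential assignment of $\bar\Phi$ (an assignment of $U$) gives $\lnot\phi^\alpha$, and in particular $\bigwedge_{\alpha\in A}\lnot(\lnot\phi)^\alpha$ is equivalent to $\bigwedge_{\alpha\in A}\phi^\alpha$.

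With this in place I would set $\tilde A := S$ and $\tilde A' := S'$ (universal assignments of $\bar\Phi$), $\tilde S := A$ (existential assignments of $\bar\Phi$), $\tilde\rho := \tau$, and $\tilde\tau := \rho$, and then verify that every hypothesis of Lemma~\ref{lem:ainc} is literally the $\bar\Phi$-form of the corresponding hypothesis of Lemma~\ref{lem:sinc}. Using the commutation fact and the composite-assignment symmetry $\alpha\sigma = \sigma\alpha$ from the preliminaries: satisfiability of $\bigwedge_{\alpha\in A}\phi^\alpha$ under $\tau$ is satisfiability of $\bigwedge_{\tilde\sigma\in\tilde S}\lnot(\lnot\phi)^{\tilde\sigma}$ under $\tilde\rho$; the inclusion $S'\subseteq\{(\tau|_{E^\alpha})^{-\alpha}\mid\alpha\in A\}$ is the $\bar\Phi$-inclusion defining $\tilde A'$; truth of $\bigwedge_{\sigma\in S\cup S'}\lnot\phi^\sigma$ under $\rho$ is truth of $\bigwedge_{\tilde\alpha\in\tilde A\cup\tilde A'}(\lnot\phi)^{\tilde\alpha}$ under $\tilde\tau$; and ``$A$ completes $S$'' together with ``$S\cup S'$ completes $A$'' become ``$\tilde S$ completes $\tilde A$'' and ``$\tilde A\cup\tilde A'$ completes $\tilde S$''. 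Applying Lemma~\ref{lem:sinc} then produces a $\tilde\nu\notin\tilde S = A$ lying in $\{(\rho|_{U^\sigma})^{-\sigma}\mid\sigma\in S\cup S'\}$, which is exactly the conclusion sought.

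The main obstacle, I expect, is not the logic of the reduction but the careful bookkeeping of the \emph{completes} relation, whose definition is asymmetric: it uses $\phi$ in one direction and $\lnot\phi$ in the other, and it fixes the order of the factors in the composite assignment. I would therefore spend the bulk of the verification confirming that swapping $A\leftrightarrow S$ \emph{together with} replacing $\phi$ by $\lnot\phi$ sends ``$A$ completes $S$'' to ``$S$ completes $A$'' and vice versa, relying on $\alpha\sigma=\sigma\alpha$ to reconcile the factor order. Should a reader prefer a self-contained argument, the same conclusion follows by mirroring the induction over the number of prefix variables used in the proof of Lemma~\ref{lem:sinc}, with the roles of $\forall$/$\exists$, $\phi$/$\lnot\phi$, and $A$/$S$ exchanged throughout.
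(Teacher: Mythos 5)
Your proof is correct, but it takes a genuinely different route from the paper: the paper's entire proof of Lemma~\ref{lem:ainc} is the single sentence that the proof is ``analogous to the proof of Lemma~\ref{lem:sinc}'', i.e., the authors intend the mirrored induction over the prefix that you mention only as a fallback in your closing remark. Your main argument instead turns that informal ``analogous'' into a formal reduction: you apply Lemma~\ref{lem:sinc} as a black box to the dual QBF $\bar\Pi.\lnot\phi$, after establishing that (i) annotations depend only on the variable order and on the assignment, not on the quantifier labels, so instantiation commutes with negation, and (ii) the asymmetric \emph{completes} relation transforms correctly when $A\leftrightarrow S$ and $\phi\leftrightarrow\lnot\phi$ are exchanged simultaneously, using $\alpha\sigma=\sigma\alpha$ to fix the factor order. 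Your translation of the hypotheses checks out: $\lnot(\lnot\phi)^{\alpha}$ is equivalent to $\phi^{\alpha}$, so the two satisfiability assumptions, the inclusion defining $S'$, and the two completion hypotheses map exactly onto those of Lemma~\ref{lem:sinc} for the dual formula, and its conclusion is literally the conclusion of Lemma~\ref{lem:ainc}. What the reduction buys is that the induction need not be repeated and the $\forall$/$\exists$ symmetry is made precise and reusable rather than asserted; what it costs is the auxiliary bookkeeping (negation/instantiation commutation, duality of \emph{completes}), which the paper sidesteps. One dependency you should state explicitly: your reduction needs Lemma~\ref{lem:sinc} for QBFs with an \emph{arbitrary} matrix, since the dual matrix $\lnot\phi$ is not in CNF even when $\phi$ is; this is fine here because the paper makes no normal-form assumption in Sections~\ref{sec:prelims}--\ref{sec:alg}, but if that lemma were restricted to PCNF your reduction would break while the paper's mirrored induction would survive.
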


\begin{proof}
The proof is analogous to the proof of Lemma~\ref{lem:sinc}.
\end{proof}

Now that we have identified the relations between the sets of universal and existential assignments, we use them to show that the algorithm from Figure~\ref{alg:basic} terminates.

\begin{theorem}
The algorithm shown in Figure~\ref{alg:basic} terminates for any
QBF $\Phi = \Pi.\phi$.
\label{thm:termination}
\end{theorem}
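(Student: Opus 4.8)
The plan is to reduce termination to showing that the sets $A_i$ and $S_i$ grow strictly in every iteration that does not return. Since $A_i \subseteq \Sigma_U$ and $S_i \subseteq \Sigma_E$ with both $\Sigma_U$ and $\Sigma_E$ finite (of sizes $2^{|U|}$ and $2^{|E|}$), and since lines~\ref{alg:basic:upd-s} and~\ref{alg:basic:upd-a} only add elements (so $S_{i-1} \subseteq S_i$ and $A_{i-1} \subseteq A_i$ always hold), a chain of strictly increasing $A_i$ can have length at most $2^{|U|}$. Hence it suffices to prove: whenever iteration $i$ reaches line~\ref{alg:basic:upd-a} (i.e., both SAT calls returned satisfiable), we have $A_i \neq A_{i-1}$ and $S_i \neq S_{i-1}$.

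I would establish both strict inclusions simultaneously by induction on $i$. The driving force is the pair of growth lemmas: Lemma~\ref{lem:sinc} converts a strict growth of $A$ in the previous iteration into a strict growth of $S$ in the current one, while Lemma~\ref{lem:ainc} converts that growth of $S$ into a strict growth of $A$. The completion hypotheses these lemmas require are delivered by Lemma~\ref{lem:compl}: passing the first SAT check at iteration $i$ yields ``$S_i$ completes $A_{i-1}$'' by part~(1), and passing the second yields ``$A_i$ completes $S_i$'' by part~(2).

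Concretely, to show $S_i \neq S_{i-1}$ I would invoke Lemma~\ref{lem:sinc} with $A := A_{i-2}$, $A' := A_{i-1} \setminus A_{i-2}$, and $S := S_{i-1}$. Its hypotheses hold because: $A' \neq \emptyset$ by the inductive hypothesis; $A'$ has exactly the required form $\{(\rho|_{U^\sigma})^{-\sigma} \mid \sigma \in S_{i-1}\}$, where $\rho$ is the witness of $\bigwedge_{\sigma \in S_{i-1}} \lnot\phi^\sigma$ from line~\ref{alg:basic:sat-s} of iteration $i-1$ (added in line~\ref{alg:basic:upd-a}); the completions ``$S_{i-1}$ completes $A_{i-2}$'' and ``$A_{i-1}$ completes $S_{i-1}$'' come from Lemma~\ref{lem:compl} applied to the two satisfiable checks of iteration $i-1$; and $\bigwedge_{\alpha \in A_{i-1}} \phi^\alpha$ is satisfied by the witness $\tau$ from line~\ref{alg:basic:sat-a} of iteration $i$. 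The lemma then produces a $\nu \in \{(\tau|_{E^\alpha})^{-\alpha} \mid \alpha \in A_{i-1}\}$ with $\nu \notin S_{i-1}$, and since line~\ref{alg:basic:upd-s} unions precisely this set into $S_{i-1}$, we obtain $S_i \neq S_{i-1}$. Symmetrically, $A_i \neq A_{i-1}$ follows from Lemma~\ref{lem:ainc} with $S := S_{i-1}$, $S' := S_i \setminus S_{i-1}$, and $A := A_{i-1}$, using the growth of $S$ just derived together with the completion facts from Lemma~\ref{lem:compl}.

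For the base case $i = 1$ we have $S_0 = \emptyset$, so any successful first check makes $S_1$ nonempty and hence $S_1 \neq S_0$, while ``$A_0$ completes $S_0$'' holds vacuously, letting Lemma~\ref{lem:ainc} apply directly to give $A_1 \neq A_0$ (the concrete instance already recorded in Lemma~\ref{lem:contra}). I expect the principal difficulty to be bookkeeping rather than mathematical depth: one must align the iteration indices so that the witnesses $\rho$ and $\tau$ and the completion relations demanded by Lemmas~\ref{lem:sinc} and~\ref{lem:ainc} are exactly those generated one iteration earlier, and one must dispatch the boundary cases where $S_{i-1}$ or $A_{i-2}$ is empty. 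Once the mutual induction is in place, finiteness of $\Sigma_U$ caps the number of iterations at $2^{|U|}$, establishing termination.
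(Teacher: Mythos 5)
Your proposal is correct and takes essentially the same route as the paper's own proof: a mutual induction showing $A_{i-1}\subset A_i$ and $S_{i-1}\subset S_i$ in every completed iteration, with the completion facts supplied by Lemma~\ref{lem:compl} feeding the growth Lemmas~\ref{lem:sinc} and~\ref{lem:ainc}, exactly as in the paper (modulo a harmless index shift in how you instantiate $A$, $A'$, $S$, $S'$). The only differences are cosmetic: you make explicit the finiteness bound via $\Sigma_U$ and $\Sigma_E$ that the paper leaves implicit, and you obtain the base-case growth of $A$ from Lemma~\ref{lem:ainc} where the paper cites Lemma~\ref{lem:contra} directly.
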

\begin{proof}
By induction over the number of iterations i, we argue that 
sets $A_{i-1} \subset A_i$ and $S_{i-1} \subset S_i$.

\emph{Base Case.} Let $i = 1$ and $A_0 = \{\alpha_0\}$. 
$S_0 \subset S_1$, because $S_0 = \emptyset$
and $\sigma_1 \in S_1$ is a satisfying assignment of $\phi^{\alpha_0}$
(if $\phi^{\alpha_0}$ is unsatisfiable, the algorithm terminates). 
$A_0 \subset A_1$ directly follows from Lemma~\ref{lem:contra}. 

\emph{Induction Step.} For $i > 1$, we argue that $S_i \subset S_{i+1}$. 
By induction hypothesis the theorem holds for iteration $i$, 
i.e., $A_i = A_{i-1} \cup A'$ with $A_{i-1} \cap A' = 
\emptyset$ and $A' \not= \emptyset$
and $S_i = S_{i-1} \cup S'$ with $S_{i-1} \cap S' = \emptyset$ and $S' \not= \emptyset$. 
Because of Lemma~\ref{lem:compl}, 
$S_{i}$ completes $A_{i-1}$, and $A_{i}$ completes $S_i$. Furthermore, 
if $\bigwedge_{\sigma \in S_i}\lnot\phi^\sigma$ is satisfiable under 
some assignment $\rho$ (otherwise the algorithm would terminate), 
by construction $A' \subseteq \{(\rho |_{U^\sigma})^{-\sigma} \mid \sigma \in S_i\}$. Hence, Lemma~\ref{lem:sinc} applies and if 
$\bigwedge_{\alpha \in A_i}\phi^\alpha$ is satisfiable under some assignment 
$\tau$ (otherwise the algorithm would immediately terminate), 
then there is an assignment $\nu \in \{(\tau |_{E^\alpha})^{-\alpha} \mid \alpha \in A_i\}$  with $\nu \not\in S_i$. 

The argument for $A_i \subset A_{i+1}$ is similar and uses the property shown in Lemma~\ref{lem:ainc}. 
\end{proof}

Note that the algorithm presented above does not make any assumptions 
on the formula structure, i.e., for a QBF $\Pi.\phi$ it is not 
required that $\phi$ is in conjunctive normal form. 
Without any modification, our algorithm also works on formulas in 
PCNF---as SAT solvers 
typically process formulas 
in CNF only, we focus on this representation for the rest of the paper. 

We conclude this section by arguing that the $\forall$Exp+Res~\cite{DBLP:conf/sat/JanotaM13,DBLP:journals/tcs/JanotaM15,DBLP:conf/mfcs/BeyersdorffCJ14} calculus 
yields the theoretical foundation of our algorithm for refuting 
a formula $\Pi.\phi$ in PCNF with universal variables $U$. 
The $\forall$Exp+Res calculus consists of two rules, the axiom rule
 \begin{prooftree}
    \AxiomC{}
    \UnaryInfC{$C^\alpha$}
\end{prooftree}
%\RightLabeli
{
where $C$ is a clause of  $\phi$   and $\alpha\colon U \to \{\top,\bot\}$ is a 
universal assignment }
as well as  the resolution rule 
 \begin{prooftree}
    \AxiomC{$C_1 \lor x^\omega$}
    \AxiomC{$C_2 \lor \lnot x^\omega$}
    \BinaryInfC{$C_1 \lor C_2$}
 \end{prooftree}
A derivation in $\forall$Exp+Res is a sequence of clauses where each 
clause is either obtained by the axiom or  derived from 
previous clauses by the application of the resolution rule. 
A refutation of a PCNF $\Pi.\phi$ is a derivation of the empty clause. 

The application of the axiom instantiates the universal variables of 
one clause of $\phi$. If enough of these instantiations can be 
found in order to derive the empty clause by the application of the 
resolution rule, the QBF $\Pi.\phi$ is false. Our algorithm in 
Figure~\ref{alg:basic} does not instantiate individual clauses, 
but all clauses of $\phi$ at once with a particular assignment of 
the universal variables. Hence, when the SAT solver finds
$\psi_\forall = \bigwedge_{\alpha \in A_i}\phi^\alpha$ unsatisfiable for 
some $A_i$, not necessarily all clauses of $\psi_\forall$ are required to 
derive the empty clause via resolution, but only the minimal unsatisfiable 
core of $\psi_\forall$, i.e., a subset of the clauses such that the removal 
of any clause would make this formula satisfiable. 

\begin{proposition}
Let $\Pi.\phi$ be a false QBF. Further, let $\psi_\forall = \bigwedge_{\alpha \in A_i}\phi^\alpha$ be obtained by the application of the algorithm in Figure~\ref{alg:basic}.
Further, let $\psi_\forall'$ be an unsatisfiable core of $\psi_\forall$. Then there is 
a $\forall$Exp+Res refutation such that all clauses that are introduced by the 
axiom rule occur in $\psi_\forall'$.
\end{proposition}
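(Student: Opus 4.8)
The plan is to reduce the claim to the completeness of ordinary propositional resolution, once the right syntactic identification is in place. First I would observe that, since $\phi$ is in PCNF, instantiation distributes over the conjunction: writing $\phi = \bigwedge_j C_j$, for any full universal assignment $\alpha$ we have $\phi^\alpha = \bigwedge_j C_j^\alpha$. Consequently $\psi_\forall = \bigwedge_{\alpha \in A_i}\bigwedge_j C_j^\alpha$ is itself a CNF whose clauses are exactly the instantiated clauses $C_j^\alpha$ with $C_j$ a clause of $\phi$ and $\alpha \in A_i$. Each such $C_j^\alpha$ is precisely a clause obtainable by a single application of the axiom rule of $\forall$Exp+Res. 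Moreover, since every $\alpha$ is a full assignment of $U$, all literals occurring in $\psi_\forall$ are over annotated existential variables $x^\omega$, where $\omega$ is determined solely by the prefix position of $x$ and the values $\alpha$ gives to the universal variables preceding it; in particular the annotation attached to a given $x$ is consistent across all clauses sharing the relevant part of $\alpha$.

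Next I would exploit the fact that, in iteration $i$, the algorithm returns false only because $\psi_\forall$ is unsatisfiable, so by definition its unsatisfiable core $\psi_\forall'$ is an unsatisfiable subset of the clauses $\{C_j^\alpha\}$. Applying the completeness of propositional resolution to $\psi_\forall'$ yields a resolution derivation of the empty clause whose leaves are drawn entirely from the clauses of $\psi_\forall'$.

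Finally I would identify this propositional refutation with a $\forall$Exp+Res refutation. Because all atoms are annotated existential variables $x^\omega$, every resolution step in the propositional refutation resolves a pair $C_1 \lor x^\omega$ and $C_2 \lor \lnot x^\omega$ on some $x^\omega$, which is verbatim the $\forall$Exp+Res resolution rule. Each leaf is a clause of $\psi_\forall'$, hence of the form $C_j^\alpha$, and is therefore exactly a clause introduced by the axiom rule. Thus the derivation is a legal $\forall$Exp+Res refutation, and by construction every clause it introduces by the axiom rule lies in $\psi_\forall'$, as required.

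I expect no serious obstacle: the entire content is the definitional observation that, after full universal instantiation, $\forall$Exp+Res resolution over the annotated atoms coincides with plain propositional resolution, so that completeness of the latter can be invoked directly. The only point demanding care is to confirm that instantiation commutes with the CNF structure and that the annotations $\omega$ are well defined and consistent across clauses, so that resolving on $x^\omega$ is meaningful; after that the argument is immediate. Minimality of the core is not needed for the existence claim and merely guarantees, as a bonus, that every axiom clause of the refutation is genuinely used.
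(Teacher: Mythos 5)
Your proposal is correct and follows essentially the same route the paper takes: the paper states this proposition without a formal proof, relying on exactly the prose argument you formalize---that the clauses of $\psi_\forall$ are precisely axiom-rule instantiations $C^\alpha$ of clauses of $\phi$, so a propositional resolution refutation of the unsatisfiable core (which exists by completeness of resolution) is verbatim a $\forall$Exp+Res refutation whose axiom clauses all lie in $\psi_\forall'$. Your added care about annotations being well defined and about instantiation distributing over the CNF structure (with clauses satisfied by $\alpha$ simplifying away to $\top$) fills in details the paper leaves implicit, but introduces no new idea beyond the paper's own sketch.
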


\section{Implementation}
\label{sec:impl}

The algorithm described in Section \ref{sec:alg} 
is realised in the solver \ijtihad\footnote{The name \ijtihad refers to the effort of solving cases in Islamic law (for details see \url{https://en.wikipedia.org/wiki/Ijtihad}).}
The most recent version of \ijtihad is available at 
\begin{center}
\url{https://extgit.iaik.tugraz.at/scos/ijtihad} 
\end{center}
\noindent
The solver is implemented in C++ and currently processes formulas in PCNF
available in the QDIMACS format. 
For accessing SAT solvers, \ijtihad uses the 
IPASIR interface~\cite{DBLP:journals/ai/BalyoBIS16}, 
which makes changing the SAT solver very easy. 
The SAT solver used in all of our experiments is 
Glucose~\cite{DBLP:conf/ijcai/AudemardS09}. 
Although the base implementation does reasonably well, we have 
realised various optimizations to make \ijtihad even more viable
in practice. 
Some of them are discussed in the following. 
\begin{figure}[htp]
\centering
\begin{subfloat}
  \centering
  \includegraphics[clip,width=.6\columnwidth]{./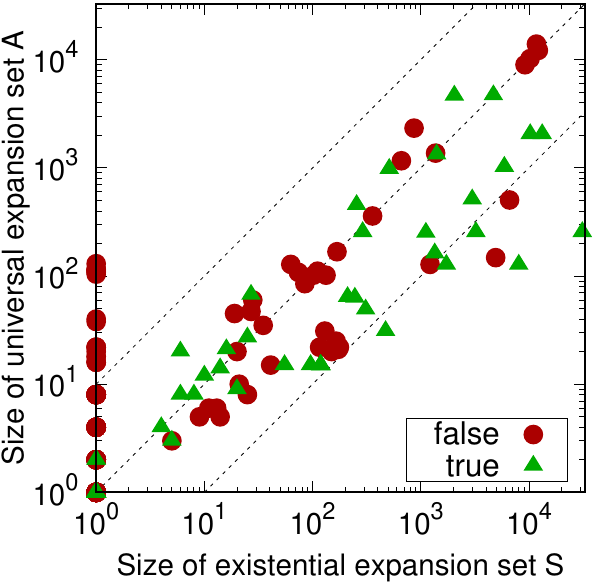}
  \caption{Sizes of sets \(S\) and \(A\)}
  \label{fig:scatterplot:1800:size}
\end{subfloat}%
\vspace{0.5cm}
\begin{subfloat}
  \centering
  \includegraphics[clip,width=.6\columnwidth]{./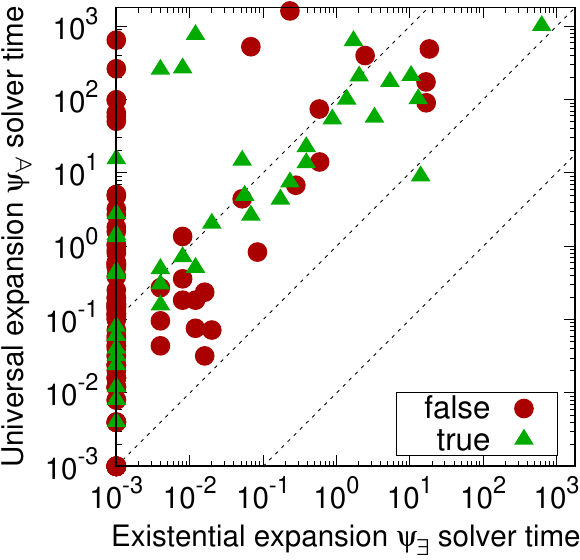}
  \caption{Time used for solving \(\psi_\exists\) and \(\psi_\forall\)}
  \label{fig:scatterplot:1800:time}
\end{subfloat}
\caption{Set sizes and time consumed during SAT calls for solved instances from QBFEVAL'17 preprocessed by \bloqqer}
\label{fig:scatterplot:1800:all}
\end{figure}

For solving a QBF $\Pi.\phi$, the basic algorithm shown in 
Figure~\ref{alg:basic}  adds instantiations of 
$\phi$ to  \(\psi_\forall = \bigwedge_{\alpha \in A_{i-1}}\phi^\alpha\) and \(\psi_\exists = \bigwedge_{\sigma \in S_{i}}\lnot \phi^\sigma\) in each 
iteration $i$ until the formula is decided. 
The calls to the SAT solver in 
Line \ref{alg:basic:sat-a} and Line \ref{alg:basic:sat-s} are done incrementally, 
i.e., we create two instances of the SAT solver and provide them with
the clauses stemming from  new instantiations of 
$\phi$ at each iteration. 
For simplicity, we omit indices of sets $A$ and $S$ and 
refer to an arbitrary iteration of the execution of the algorithm in the 
following discussion. 

Figure~\ref{fig:scatterplot:1800:all} relates set sizes of $A$ and $S$ 
as well as 
the accumulated time that one SAT solver needs to solve  \(\psi_\forall\)
with the time the other SAT solver needs to solve  \(\psi_\exists\)
for the formulas of the PCNF track of QBFEVAL'17 
(preprocessed with Bloqqer~\cite{DBLP:conf/cade/BiereLS11}). 
We also distinguish between true and false formulas. 
In Figure~\ref{fig:scatterplot:1800:size} we see that for true
formulas, set $S$ tends to be larger than $A$, while for false instances 
the picture is less clear. 
Figure~\ref{fig:scatterplot:1800:time} shows the overall time needed for solving \(\psi_\forall\) (y-axis) and 
\(\psi_\exists\) (x-axis). In almost all cases, the solver that 
handles $\psi_\forall$ needs more time than the solver that handles 
$\psi_\exists$.  
This may be founded on the observation that 
many QBFs have considerably more existential variables 
than universal variables \cite{DBLP:journals/corr/LonsingECP18appendix},
hence the instantiations 
added to $\psi_\forall$ are much larger than the instantiations 
added to $\psi_\exists$.   

In Line~\ref{alg:basic:init} of Figure~\ref{alg:basic}, 
the set of universal assignments \(A\) 
is initialised with \emph{one} arbitrary assignment \(\alpha_0\). 
Obviously, the set  \(A\) may also be initialized with 
multiple assignments. In our current implementation,  
we initialize \(A\) with the assignments 
that set the variables of one universal quantifier block to 
\(\bot\) and the variables of all other universal quantifier blocks 
to \(\top\). The impact of various initialization heuristics
remains to be investigated in future work.

\begin{figure*}[t]
\begin{minipage}[t]{0.33\textwidth} 
\begin{center}
\small
    {\setlength\tabcolsep{0.15cm}
\begin{tabular}{lrrcc}
  \hline
\emph{Solver} & \multicolumn{1}{c}{\emph{S}} & \multicolumn{1}{c}{\emph{$\bot$}} & \emph{$\top$} & \emph{Time} \\
\hline
\revqfun & 220 & 145 & 75 & 572K \\ 
\ghostqcegar & 194 & 120 & 74 & 617K \\ 
\caqe & 170 & 128 & 42 & 656K \\ 
\rareqs & 167 & 133 & 34 & 660K \\ 
\depqbfprefixopt & 167 & 121 & 46 & 666K \\ 
\heretic & 163 & 133 & 30 & 664K \\ 
\qstsdefnobreaksym & 152 & 116 & 36 & 687K \\ 
\ijtihad & 150 & 127 & 23 & 684K \\ 
\quterandom & 130 & 91 & 39 & 720K \\ 
\qesto & 109 & 86 & 23 & 761K \\ 
\dynqbf & 72 & 38 & 34 & 826K \\ 
\hline
\end{tabular}
    }%end: setlength
\captionof{table}{Original instances.}
\label{tab:2017:original}
\end{center}
\end{minipage}
\begin{minipage}[t]{0.33\textwidth} 
\begin{center}
\small
    {\setlength\tabcolsep{0.15cm}
\begin{tabular}{lcrcc}
\hline
\emph{Solver} & \emph{S} & \multicolumn{1}{c}{\emph{$\bot$}} & \emph{$\top$} & \emph{Time} \\
\hline
\rareqs & 256 & 180 & 76 & 508K \\ 
\caqe & 251 & 168 & 83 & 522K \\ 
\heretic & 245 & 172 & 73 & 522K \\
\revqfun & 219 & 148 & 71 & 568K \\ 
\ijtihad & 217 & 156 & 61 & 564K \\ 
\qstsdefnobreaksym & 208 & 151 & 57 & 585K \\ 
\qesto & 196 & 137 & 59 & 610K \\ 
\depqbfprefixopt & 183 & 117 & 66 & 633K \\ 
\ghostqcegar & 163 & 100 & 63 & 670K \\ 
\quterandom & 154 & 109 & 45 & 682K \\ 
\dynqbf & 151 & 95 & 56 & 684K \\ 
\hline
\end{tabular}
    }%end: setlength
\captionof{table}{Preprocessing~by \bloqqer.}
\label{tab:2017:bloqqer}
\end{center}
\end{minipage}
\begin{minipage}[t]{0.33\textwidth} 
\begin{center}
\small
    {\setlength\tabcolsep{0.15cm}
\begin{tabular}{lrrrc}
  \hline
\emph{Solver} & \multicolumn{1}{c}{\emph{S}} & \multicolumn{1}{c}{\emph{$\bot$}} & \emph{$\top$} & \emph{Time} \\
\hline
\heretic & 92 & 81 & 11 & 195K \\ 
\depqbfprefixopt & 89 & 74 & 15 & 205K \\ 
\caqe & 88 & 73 & 15 & 204K \\ 
\rareqs & 82 & 78 & 4 & 211K \\ 
\qstsdefnobreaksym & 81 & 69 & 12 & 216K \\ 
\ijtihad & 80 & 73 & 7 & 212K \\ 
\revqfun & 78 & 75 & 3 & 224K \\ 
\quterandom & 70 & 60 & 10 & 236K \\ 
\qesto & 51 & 44 & 7 & 269K \\ 
\ghostqcegar & 45 & 39 & 6 & 279K \\ 
\dynqbf & 15 & 13 & 2 & 332K \\ 
\hline
\end{tabular}
    }%end: setlength
\captionof{table}{197 original instances with four or more quantifier blocks.}
\label{tab:2017:original:many:qblocks}
\end{center}
\end{minipage}
\end{figure*}
In Line \ref{alg:basic:upd-s} and Line \ref{alg:basic:upd-a}
our algorithm increases the size of \(S\) and \(A\) in each iteration 
of the main loop, as argued in Theorem~\ref{thm:termination}. In the worst case, this 
leads to an exponential increase in space consumption. 
Although we detect shared clauses among the instantiations, 
that alone is not enough to significantly reduce the space consumption. However, 
some of the  assignments found in an earlier iteration 
could become obsolete after better assignments were found. It is therefore beneficial to empty 
either \(S\) or \(A\) and then reconstruct them from \(\psi_\forall\) and \(\psi_\exists\), similarly to what is done in Line \ref{alg:basic:upd-s} and Line \ref{alg:basic:upd-a}. We evaluated 
several heuristics for scheduling these set resets, and we found that
resetting periodically and close to the memory limit works best.
The regular resetting of one set  
has a similar effect as restarts in SAT solvers, and 
we observed a considerable improvement in performance, 
especially in terms of memory consumption. 
Our implementation periodically resets the set \(A\), 
since experiments indicate that 
the resulting formula $\psi_\forall$ is much harder to solve
than $\psi_\exists$ as seen in Figure~\ref{fig:scatterplot:1800:time}.
Besides the aforementioned imbalance between universal and 
existential variables, it is also likely due to the structure of 
\(\psi_\exists\) which is a conjunction of formulas in disjunctive 
normal form. 
Note that this reset of $A$ does not affect the termination argument presented 
in Theorem~\ref{thm:termination}, since the sets $A$ and $S$ still complete each other. 

Finally, we extended the presented approach with orthogonal 
reasoning techniques like QCDCL~\cite{DBLP:series/faia/GiunchigliaMN09}
for exploiting the different strengths of $\forall$Exp+Res and 
Q-resolution, yielding a hybrid solver that smoothly integrates both 
solving paradigms. 
To this end, we implemented the  prototypical solver 
called  \heretic which pursues the following idea:   
The main loop of the algorithm shown in Figure~\ref{alg:basic} 
(Lines~\ref{alg:basic:refloop-begin}-\ref{alg:basic:refloop-end}) 
is extended in a sequential portfolio style such that
a QCDCL solver is periodically called. 
After each call, all clauses that were learned through QCDCL are 
added to \(\Pi.\Phi\), making them available in further iterations. 
These new clauses potentially exclude assignments that 
would otherwise be possible and that could result in more 
iterations of the main loop. 

The solver \heretic extends \ijtihad by additional invocations of 
the QCDCL solver \depqbfbat~\cite{DBLP:conf/cade/LonsingE17}. 
About every $30$ seconds, \depqbfbat is called and run for about $30$ 
seconds. The learnt clauses are obtained via the API of \depqbfbat. 
Leveraging learned cubes is subject to future work.

\section{Evaluation}
\label{sec:eval}

%%%%%%%%%%%%%%%%%%%%%%%%%%%%%%%%%%%%%%%%

We evaluate non-recursive expansion as implemented
in our solvers \ijtihad and its hybrid variant \heretic on the benchmarks from the PCNF
track of the QBFEVAL'17 competition. All experiments 
were carried out on a cluster of
Intel Xeon CPUs (E5-2650v4, 2.20 GHz) running Ubuntu 16.04.1 with a CPU 
time limit of 1800 seconds and a memory limit of 7 GB.
We considered the following top-performing solvers from
QBFEVAL'17: 
\quterandom~\cite{DBLP:conf/sat/PeitlSS17},
\revqfun~\cite{DBLP:conf/aaai/Janota18},
\rareqs~\cite{Janota20161},
\caqe~\cite{DBLP:conf/fmcad/RabeT15,DBLP:conf/cav/Tentrup17},
\dynqbf~\cite{DBLP:conf/sat/CharwatW16},
\ghostqcegar~\cite{Janota20161,DBLP:conf/sat/KlieberSGC10},
\depqbfprefixopt~\cite{DBLP:conf/cade/LonsingE17},
\qesto~\cite{DBLP:conf/ijcai/JanotaM15}, and
\qstsdefnobreaksym~\cite{DBLP:conf/sat/0001JT16,DBLP:conf/aaai/0001JT16}. 
Our experiments are based on original benchmarks
without preprocessing and benchmarks preprocessed using
\bloqqer~\cite{DBLP:conf/cade/BiereLS11,DBLP:journals/jair/HeuleJLSB15} with a
timeout of two hours.\footnote{We refer to the appendix for additional
experiments.}
We included the 76 formulas already solved by \bloqqer in both benchmark sets.

Tables~\ref{tab:2017:original} and~\ref{tab:2017:bloqqer} show the total
numbers of solved instances (\emph{S}), solved unsatisfiable (\emph{$\bot$})
and satisfiable ones (\emph{$\top$}), and total CPU time including timeouts.  
In the following, we focus on a comparison of our solvers \ijtihad and \heretic with
\rareqs
(cf.~Figure~\ref{fig:scatter:2017:original:bloqqer:heretic:vs:rareqs}). Unlike
our solvers, \rareqs is based on a recursive implementation of
expansion.
\begin{figure}[h!]
\includegraphics[scale=1]{./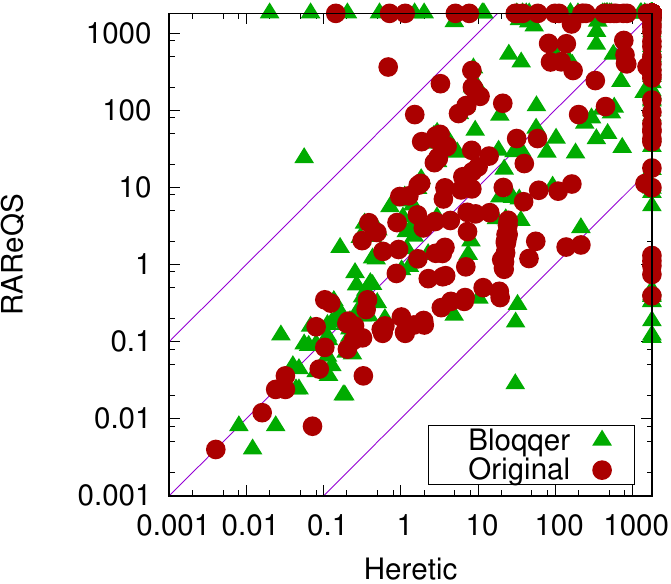}
\caption{Scatter plots of the run times of \heretic and \rareqs on original
  instances (related to Table~\ref{tab:2017:original}) and on instances
  preprocessed by \bloqqer (related to Table~\ref{tab:2017:bloqqer}).}
\label{fig:scatter:2017:original:bloqqer:heretic:vs:rareqs}
\end{figure}

In general, preprocessing has a considerable impact on the number of
solved instances. The difference in solved instances between \ijtihad
and \rareqs is 17 on original instances (Table~\ref{tab:2017:original}), and becomes larger  on
preprocessed instances (Table~\ref{tab:2017:bloqqer}).

Notably \heretic, despite its simple design, significantly
outperforms \ijtihad on the two benchmark sets. Moreover, \heretic
is ranked third on preprocessed instances 
(Table~\ref{tab:2017:bloqqer})
and thus is on par with state-of-the-art solvers. On the two
benchmark sets, the
gap in solved instances between \rareqs and \heretic is considerably
smaller than the one between \rareqs and \ijtihad.

We report on memory consumption of expansion-based solvers.  
While \rareqs, \ijtihad, and \heretic run out of memory on 42, 61, and 39
original instances (Table~\ref{tab:2017:original}), respectively, these numbers drop to
17, 41, and 24, respectively, with preprocessing 
(Table~\ref{tab:2017:bloqqer}). The average memory footprint is 1718~MB, 1836~MB, and 1842~MB for \rareqs, \ijtihad, and
\heretic, respectively, and 1056~MB, 1311~MB, and~1187~MB on preprocessed
instances. Interestingly, \ijtihad has a smaller median memory footprint than
\rareqs without (792~MB vs.~802~MB) and with preprocessing (286~MB vs.~364~MB).

The strength of \heretic becomes obvious for
formulas that have four or more quantifier blocks
(i.e., three or more quantifier alternations), cf.~\cite{DBLP:journals/corr/LonsingECP18appendix}. As
shown in Table~\ref{tab:2017:original:many:qblocks}, \heretic
outperforms all other solvers on these instances. We made a
similar observation on preprocessed formulas.

Moreover, \heretic solves only four original instances less than
\depqbfprefixopt (Table~\ref{tab:2017:original}), and outperforms
\depqbfprefixopt on preprocessed instances
(Table~\ref{tab:2017:bloqqer}). These results indicate the potential of
combining the orthogonal proof systems $\forall$Exp+Res as implemented in \ijtihad and
Q-resolution as implemented in \depqbfprefixopt in a hybrid solver such as
\heretic.

\begin{table}[h]
  {\setlength\tabcolsep{0.125cm}
    \begin{center}
  \begin{tabular}{l@{\qquad}ccc@{\qquad}ccc@{\qquad}ccc@{\qquad}ccc}
    \hline
        & \emph{R} & \emph{vs.} & \emph{I} & \emph{R} & \emph{vs.} & \emph{H} & \emph{I} & \emph{vs.} & \emph{H} &  \emph{D} & \emph{vs.} & \emph{H} \\
    %    \hline
               & $<$ & = & $>$ & $<$ & = & $>$ & $<$ & = & $>$ & $<$ & = & $>$ \\ 
    %\hline
    N          & 27 & 140 & 10 & 26 & 141 & 22 & 5 & 145 & 18 & 65 & 102 & 61 \\
    %\hline
    B & 56 & 200 & 17 & 38 & 218 & 27 & 7 & 210 & 35 & 17 & 166 & 79 \\
    %\hline
    \hline
  \end{tabular}
    \end{center}
  } %end: setlength
  \caption{Pairwise comparison of \rareqs (R), \ijtihad (I), 
    \heretic (H), and \depqbfprefixopt (D) by instances without (N) and with preprocessing by $\bloqqer$ (B) that were solved by only one solver of the considered pair ($<$,
    $>$) or by both ($=$).}
  \label{tab:orthogonal}
\end{table}
Although \rareqs outperforms both \ijtihad and \heretic on the two given  
benchmark sets (Tables~\ref{tab:2017:original} and~\ref{tab:2017:bloqqer}), \rareqs failed to solve certain instances that were
solved by \ijtihad and \heretic. Table~\ref{tab:orthogonal} shows related statistics. E.g., on
preprocessed instances (row ``B''), 218 instances were solved by both
\rareqs and \heretic (column ``\emph{R vs.~H}), 38 only by \rareqs,
and 27 only by \heretic. Summing up these numbers yields a total of
283 solved instances (more than any individual solver on preprocessed
instances in
Table~\ref{tab:2017:bloqqer}) that could have been solved by a
\emph{hypothetical solver} combining \rareqs and \heretic. This
observation underlines the strength of expansion in
general and, in particular, of the hybrid approach implemented in
\heretic. \heretic solved a significant amount of instances not solved by
\rareqs, it clearly outperformed \ijtihad on all benchmarks (column ``\emph{I
  vs.~H}'') and \depqbfprefixopt on preprocessed ones (``\emph{D
  vs.~H}'').

%%%%%%%%%%%%%%%%%%%
%%%%%%%%%%%%%%%%%%%

\section{Conclusion}
\label{sec:concl}

We presented a novel non-recursive algorithm for expansion-based QBF solving that 
uses only two SAT solvers for incrementally refining the propositional 
abstraction and the negated propositional abstraction of a QBF. 
We gave a concise proof of termination and soundness and demonstrated
with several experiments that our prototype compares well with 
the state of the art. 
In addition to non-recursive expansion,
we also studied the impact of combining Q-resolution and
$\forall$Exp+Res in a hybrid approach. To this end, we coupled a QCDCL
solver and non-recursive expansion to make clauses derived by the
QCDCL solver available to the expansion solver.
Experimental results indicated that
the hybrid approach significantly outperforms our implementation of
non-recursive expansion indicating the potential of combining 
expansion-based approaches with Q-resolution which gives rise to an 
exciting direction of future work. Further, our current 
implementation supports only formulas in conjunctive normal form while 
in theory, our 
approach does not make any assumptions on the structure of the 
propositional part of the QBF. We also plan to investigate 
how this formula structure can be exploited for efficiently 
processing the negation of the formula. 

\clearpage

%%%%%%%%% START: APPENDIX %%%%%%%%%

\clearpage

\appendices

\section{Additional Experimental Data}

\subsection*{Preprocessing with \hqspre}

\begin{figure}[h]
\begin{minipage}{0.33\textwidth} 
\small
\begin{center}
    {\setlength\tabcolsep{0.15cm}
\begin{tabular}{lccrc}
\hline
\emph{Solver} & \emph{S} & \emph{$\bot$} & \multicolumn{1}{c}{\emph{$\top$}} & \emph{Time} \\
\hline
\caqe & 306 & 197 & 109 & 415K \\ 
\rareqs & 294 & 194 & 100 & 429K \\ 
\qesto & 287 & 194 & 93 & 443K \\ 
\revqfun & 281 & 190 & 91 & 453K \\ 
\heretic & 279 & 188 & 91 & 460K \\ 
\qstsdefnobreaksym & 264 & 179 & 85 & 484K \\ 
\depqbfprefixopt & 263 & 176 & 87 & 490K \\ 
\quterandom & 255 & 171 & 84 & 497K \\ 
\ijtihad & 250 & 176 & 74 & 500K \\ 
\ghostqcegar & 244 & 163 & 81 & 522K \\
\dynqbf & 233 & 156 & 77 & 528K \\ 
\hline
\end{tabular}
    }%end: setlength
\captionof{table}{Preprocessing~by \hqspre.}
\label{tab:2017:hqspre}
\end{center}
\includegraphics[scale=0.5]{./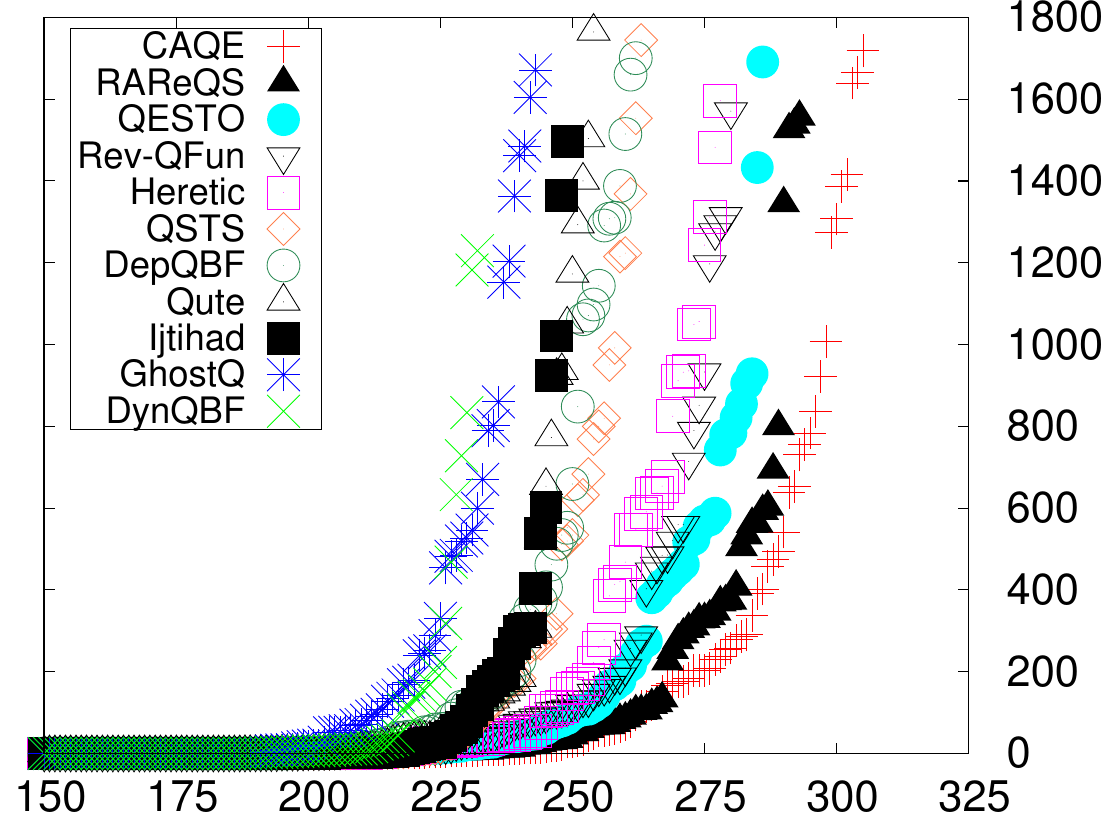}
\captionof{figure}{Plot related to Table~\ref{tab:2017:hqspre}.}
\label{fig:cactusplot:2017:hqspre}
\end{minipage}
\end{figure}

\begin{figure}[h]
\begin{minipage}{0.33\textwidth} 
\begin{center}
\small
    {\setlength\tabcolsep{0.15cm}
\begin{tabular}{lrrrc}
  \hline
\emph{Solver} & \multicolumn{1}{c}{\emph{S}} & \multicolumn{1}{c}{\emph{$\bot$}} & \emph{$\top$} & \emph{Time} \\
\hline
\depqbfprefixopt & 33 & 23 & 10 & 123K \\ 
\heretic & 30 & 20 & 10 & 127K \\ 
\qesto & 29 & 18 & 11 & 127K \\ 
\caqe & 29 & 16 & 13 & 127K \\ 
\quterandom & 24 & 17 & 7 & 137K \\ 
\rareqs & 23 & 17 & 6 & 136K \\ 
\revqfun & 22 & 18 & 4 & 137K \\ 
\qstsdefnobreaksym & 20 & 12 & 8 & 140K \\ 
\ijtihad & 17 & 11 & 6 & 145K \\ 
\ghostqcegar & 11 & 4 & 7 & 160K \\ 
\dynqbf & 7 & 4 & 3 & 163K \\ 
\hline
\end{tabular}
    }%end: setlength
\captionof{table}{Preprocessing by \hqspre: \\
97 instances with four or more \\quantifier blocks.}
\label{tab:2017:hqspre:many:qblocks}
\end{center}
\includegraphics[scale=0.5]{./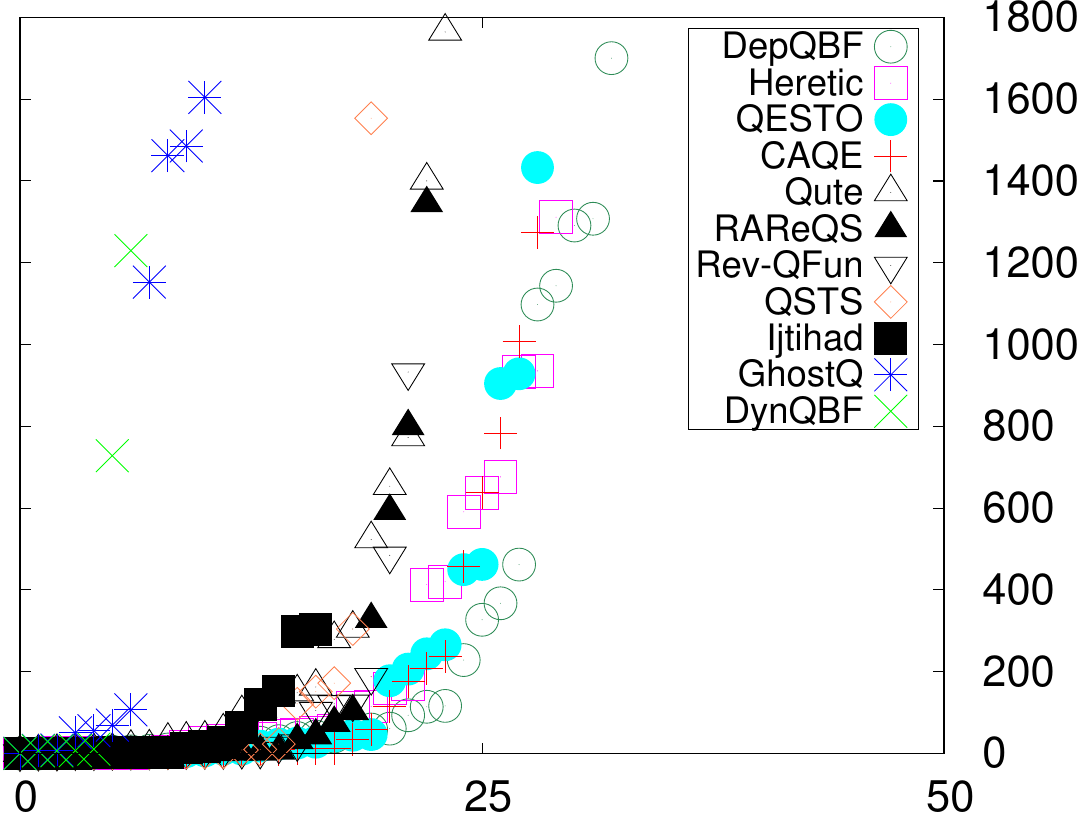}

\captionof{figure}{Plot related to Table~\ref{tab:2017:hqspre:many:qblocks}.}
\label{fig:cactusplot:2017:hqspre:many:qblocks}
\end{minipage}
\end{figure}

\begin{table}[h]
  {\setlength\tabcolsep{0.15cm}
    \begin{center}
  \begin{tabular}{l@{\qquad}ccc@{\qquad}ccc@{\qquad}ccc}
    \hline
        & \emph{R} & \emph{vs.} & \emph{I} & \emph{R} & \emph{vs.} & \emph{H} & \emph{I} & \emph{vs.} & \emph{H} \\
    %    \hline
               & $<$ & = & $>$ & $<$ & = & $>$ & $<$ & = & $>$ \\ 
    %\hline
    No preprocessing          & 27 & 140 & 10 & 26 & 141 & 22 & 5 & 145 & 18 \\
    %\hline
    \bloqqer & 56 & 200 & 17 & 38 & 218 & 27 & 7 & 210 & 35 \\
    %\hline
    \hqspre  & 54 & 240 & 10 & 40 & 254 & 25 & 1 & 249 & 30 \\
    \hline
  \end{tabular}
    \end{center}
  } %end: setlength
  \caption{Pairwise comparison of \rareqs (R), \ijtihad (I), and
    \heretic (H) by numbers of instances from QBFEVAL'17 with and
    without preprocessing that were solved by only one solver of the considered pair ($<$,
    $>$) or by both ($=$).}
  \label{tab:orthogonal:appendix}
\end{table}

\subsection*{Preprocessing with \bloqqer}

\begin{minipage}[t]{0.33\textwidth} 
\begin{center}
\small
    {\setlength\tabcolsep{0.15cm}
\begin{tabular}{lrrrc}
  \hline
\emph{Solver} & \multicolumn{1}{c}{\emph{S}} & \multicolumn{1}{c}{\emph{$\bot$}} & \emph{$\top$} & \emph{Time} \\
\hline
\heretic & 86 & 71 & 15 & 130K \\ 
\rareqs & 79 & 72 & 7 & 143K \\ 
\caqe & 73 & 59 & 14 & 161K \\ 
\qstsdefnobreaksym & 70 & 59 & 11 & 160K \\ 
\ijtihad & 67 & 57 & 10 & 158K \\ 
\revqfun & 64 & 55 & 9 & 168K \\ 
\depqbfprefixopt & 61 & 44 & 17 & 178K \\ 
\qesto & 51 & 43 & 8 & 190K \\ 
\quterandom & 46 & 40 & 6 & 208K \\ 
\ghostqcegar & 26 & 16 & 10 & 231K \\ 
\dynqbf & 23 & 13 & 10 & 241K \\ 
\hline
\end{tabular}
    }%end: setlength
\captionof{table}{154 preprocessed instances with four or more quantifier blocks.}
\label{tab:2017:bloqqer:many:qblocks}
\end{center}
\end{minipage}
\begin{minipage}{0.33\textwidth} 
\begin{center}
\includegraphics[scale=0.5]{./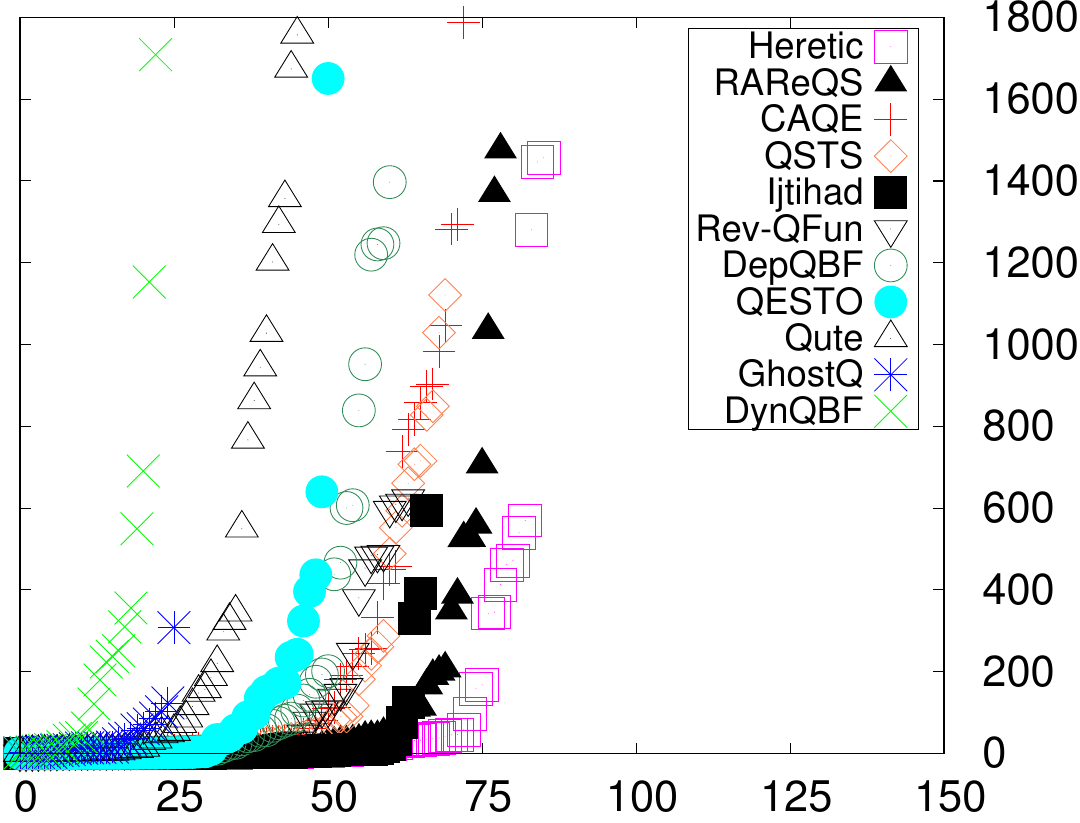}
\captionof{figure}{Plot related to Table~\ref{tab:2017:bloqqer:many:qblocks}.}
\label{fig:cactusplot:2017:bloqqer:many:qblocks}
\end{center}
\end{minipage}

\subsection*{Additional plots}

\begin{center}
\begin{minipage}[t]{0.33\textwidth} 
\includegraphics[scale=0.5]{./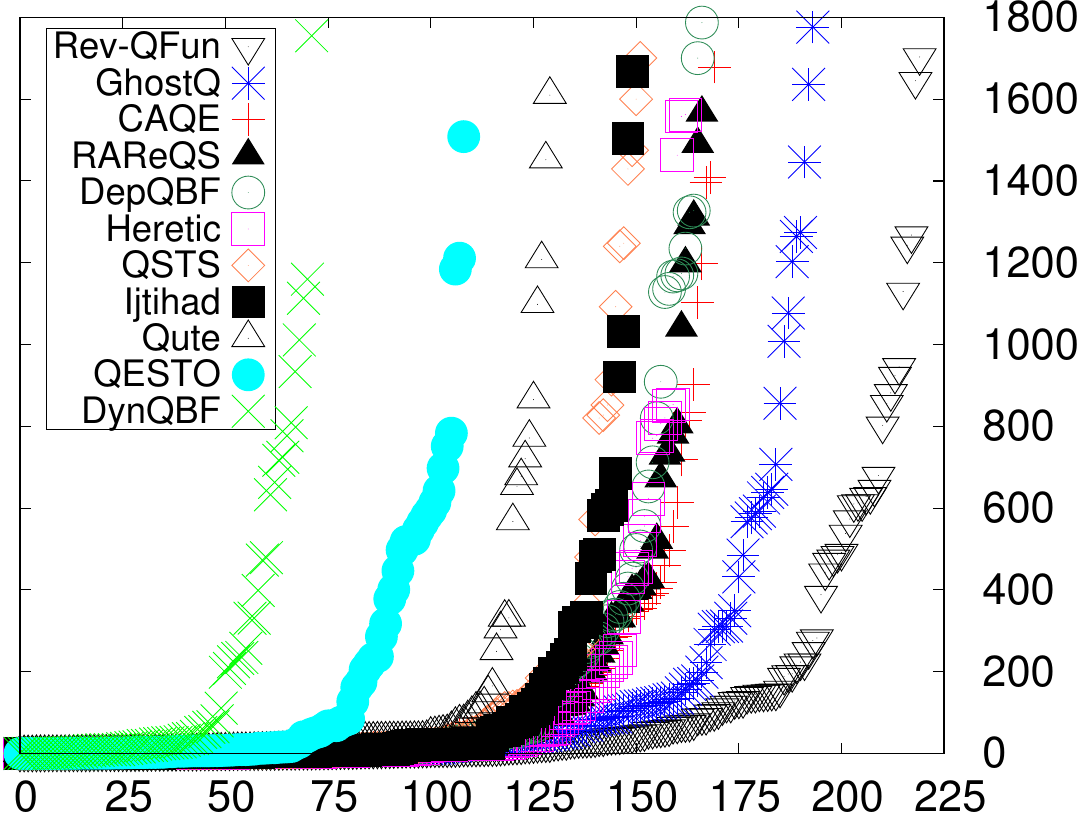}
\captionof{figure}{Plot related to Table~\ref{tab:2017:original}.}
\label{fig:cactusplot:2017:original}
\end{minipage}
\begin{minipage}[t]{0.33\textwidth} 
\includegraphics[scale=0.5]{./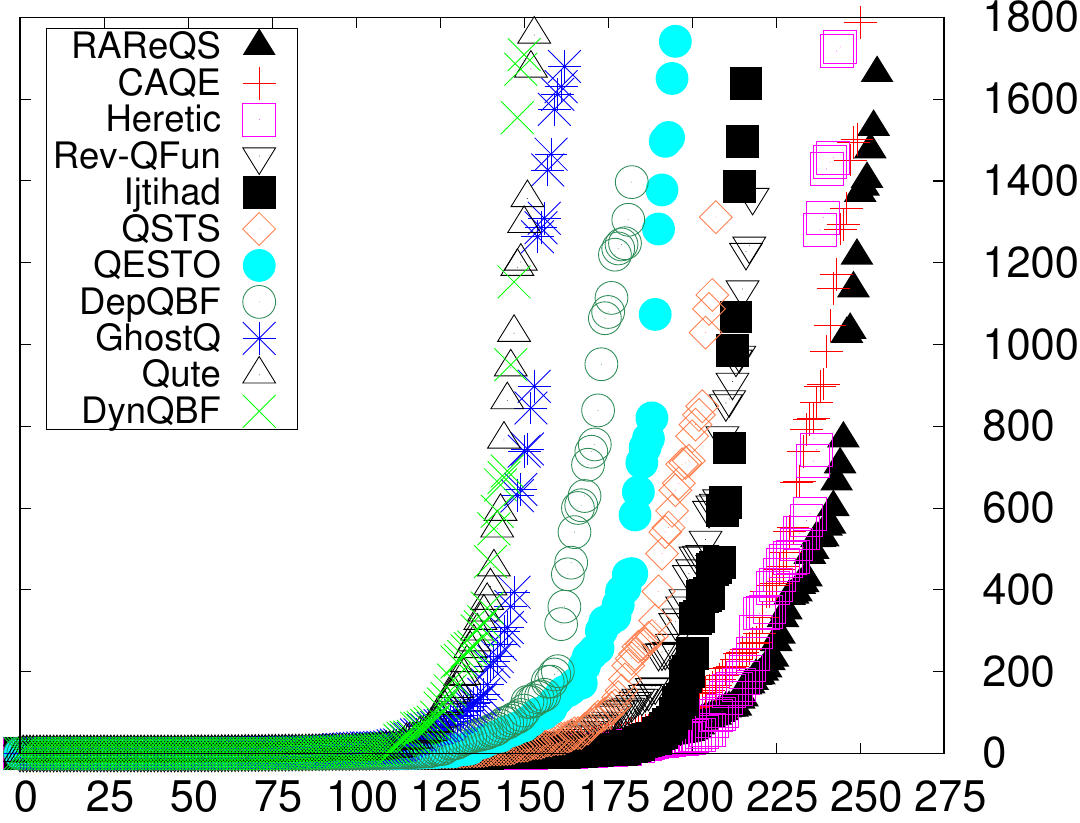}
\captionof{figure}{Plot related to Table~\ref{tab:2017:bloqqer}.}
\label{fig:cactusplot:2017:bloqqer}
\end{minipage}
\begin{minipage}{0.33\textwidth} 
\includegraphics[scale=0.5]{./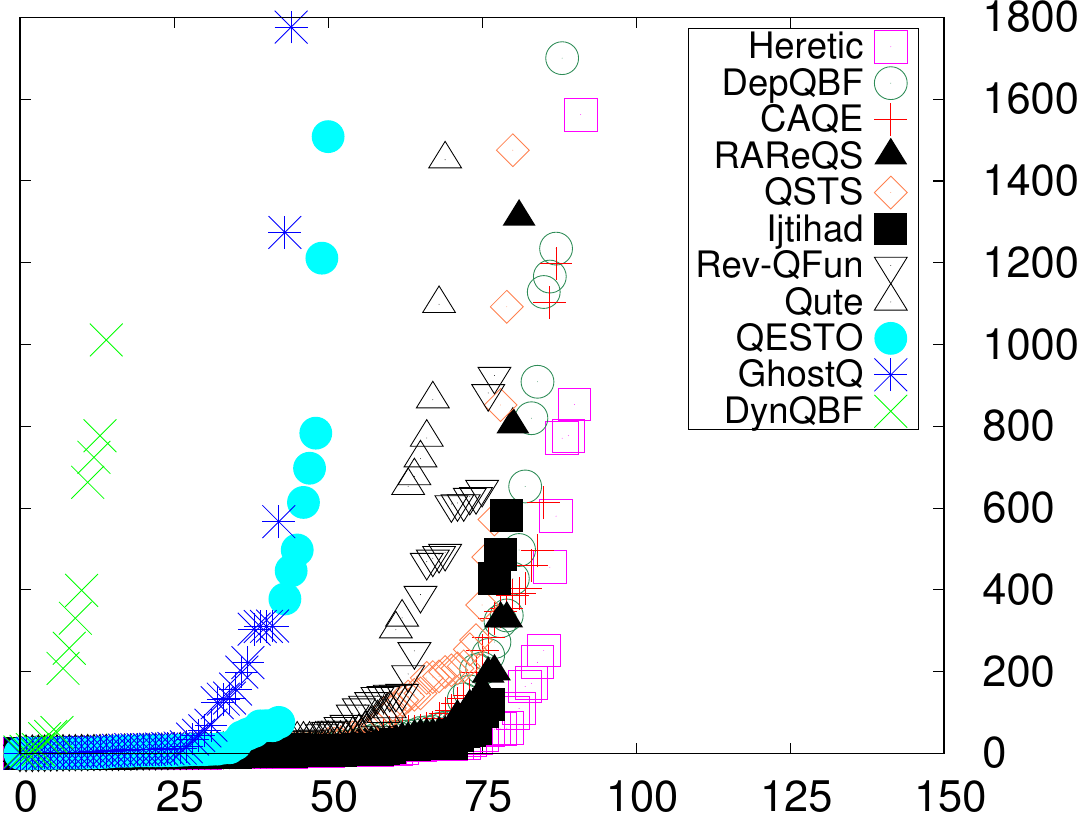}
\captionof{figure}{Plot related to Table~\ref{tab:2017:original:many:qblocks}.}
\label{fig:cactusplot:2017:original:many:qblocks}
\end{minipage}
\end{center}
%%%%%%%%%%%%%%%%%

%%%%%%%%% END: APPENDIX %%%%%%%%%


\begin{thebibliography}{10}
\providecommand{\url}[1]{\texttt{#1}}
\providecommand{\urlprefix}{URL }

\bibitem{DBLP:conf/ijcai/AudemardS09}
Audemard, G., Simon, L.: Predicting learnt clauses quality in modern {SAT}
  solvers. In: {IJCAI} 2009, Proceedings of the 21st International Joint
  Conference on Artificial Intelligence, Pasadena, California, USA, July 11-17,
  2009. pp. 399--404 (2009),
  \url{http://ijcai.org/Proceedings/09/Papers/074.pdf}

\bibitem{DBLP:conf/fmcad/AyariB02}
Ayari, A., Basin, D.A.: {{QUBOS:} Deciding Quantified Boolean Logic Using
  Propositional Satisfiability Solvers}. In: {FMCAD}. LNCS, vol. 2517, pp.
  187--201. Springer (2002)

\bibitem{DBLP:conf/sat/BalabanovJSMB16}
Balabanov, V., Jiang, J.R., Scholl, C., Mishchenko, A., Brayton, R.K.: {2QBF:
  Challenges and Solutions}. In: {SAT}. LNCS, vol. 9710, pp. 453--469. Springer
  (2016)

\bibitem{DBLP:journals/ai/BalyoBIS16}
Balyo, T., Biere, A., Iser, M., Sinz, C.: {SAT Race 2015}. Artif. Intell.  241,
   45--65 (2016), \url{http://dx.doi.org/10.1016/j.artint.2016.08.007}

\bibitem{DBLP:conf/mfcs/BeyersdorffCJ14}
Beyersdorff, O., Chew, L., Janota, M.: On unification of {QBF} resolution-based
  calculi. In: Proc. of the 39th Int. Symbosium on Mathematical Foundations of
  Computer Science ({MFCS}). LNCS, vol. 8635, pp. 81--93. Springer (2014)

\bibitem{DBLP:conf/cade/BiereLS11}
Biere, A., Lonsing, F., Seidl, M.: {Blocked Clause Elimination for {QBF}}. In:
  CADE. LNCS, vol. 6803, pp. 101--115. Springer (2011)

\bibitem{DBLP:conf/vmcai/BloemKS14}
Bloem, R., K{\"{o}}nighofer, R., Seidl, M.: {SAT-Based Synthesis Methods for
  Safety Specs}. In: {VMCAI}. LNCS, vol. 8318, pp. 1--20. Springer (2014)

\bibitem{DBLP:conf/sat/0001JT16}
Bogaerts, B., Janhunen, T., Tasharrofi, S.: {SAT-to-SAT in QBFEval 2016}. In:
  {QBF Workshop}. {CEUR Workshop Proceedings}, vol. 1719, pp. 63--70.
  CEUR-WS.org (2016)

\bibitem{DBLP:conf/aaai/0001JT16}
Bogaerts, B., Janhunen, T., Tasharrofi, S.: {Solving QBF Instances with Nested
  SAT Solvers}. In: Beyond NP. {AAAI} Workshops, vol. {WS-16-05}. {AAAI} Press
  (2016)

\bibitem{DBLP:conf/sat/BubeckB07}
Bubeck, U., Kleine{ }B{\"{u}}ning, H.: {Bounded Universal Expansion for
  Preprocessing QBF}. In: SAT. LNCS, vol. 4501, pp. 244--257. Springer (2007)

\bibitem{DBLP:conf/sat/CharwatW16}
Charwat, G., Woltran, S.: {Dynamic Programming-based QBF Solving}. In: QBF
  Workshop. {CEUR} Workshop Proceedings, vol. 1719, pp. 27--40. CEUR-WS.org
  (2016)

\bibitem{DBLP:conf/cav/ChengHR16}
Cheng, C., Hamza, Y., Ruess, H.: {Structural Synthesis for GXW Specifications}.
  In: CAV. LNCS, vol. 9779, pp. 95--117. Springer (2016)

\bibitem{DBLP:conf/tacas/ChengLR17}
Cheng, C., Lee, E.A., Ruess, H.: {autoCode4: Structural Controller Synthesis}.
  In: TACAS. LNCS, vol. 10205, pp. 398--404. Springer (2017)

\bibitem{DBLP:journals/jacm/ClarkeGJLV03}
Clarke, E.M., Grumberg, O., Jha, S., Lu, Y., Veith, H.: {Counterexample-Guided
  Abstraction Refinement for Symbolic Model Checking}. J.~{ACM}  50(5),
  752--794 (2003)

\bibitem{DBLP:conf/sat/DershowitzHK05}
Dershowitz, N., Hanna, Z., Katz, J.: {Bounded Model Checking with QBF}. In:
  {SAT}. LNCS, vol. 3569, pp. 408--414. Springer (2005)

\bibitem{DBLP:journals/amai/EglyKLP17}
Egly, U., Kronegger, M., Lonsing, F., Pfandler, A.: {Conformant Planning as a
  Case Study of Incremental QBF Solving}. Ann. Math. Artif. Intell.  80(1),
  21--45 (2017)

\bibitem{DBLP:conf/tacas/FaymonvilleFRT17}
Faymonville, P., Finkbeiner, B., Rabe, M.N., Tentrup, L.: {Encodings of Bounded
  Synthesis}. In: {TACAS}. LNCS, vol. 10205, pp. 354--370. Springer (2017)

\bibitem{DBLP:journals/corr/FinkbeinerT15}
Finkbeiner, B., Tentrup, L.: {Detecting Unrealizability of Distributed
  Fault-tolerant Systems}. Logical Methods in Computer Science  11(3) (2015)

\bibitem{DBLP:conf/nfm/GasconT14}
Gasc{\'{o}}n, A., Tiwari, A.: {A Synthesized Algorithm for Interactive
  Consistency}. In: {NASA Formal Methods}. LNCS, vol. 8430, pp. 270--284.
  Springer (2014)

\bibitem{DBLP:series/faia/GiunchigliaMN09}
Giunchiglia, E., Marin, P., Narizzano, M.: {Reasoning with Quantified Boolean
  Formulas}. In: Handbook of Satisfiability, FAIA, vol. 185, pp. 761--780.
  {IOS} Press (2009)

\bibitem{DBLP:conf/sat/GiunchigliaMN10}
Giunchiglia, E., Marin, P., Narizzano, M.: {sQueezeBF: An Effective
  Preprocessor for QBFs Based on Equivalence Reasoning}. In: {SAT}. LNCS, vol.
  6175, pp. 85--98. Springer (2010)

\bibitem{DBLP:journals/jair/HeuleJLSB15}
Heule, M., J{\"{a}}rvisalo, M., Lonsing, F., Seidl, M., Biere, A.: {Clause
  Elimination for {SAT} and {QSAT}}. JAIR  53,  127--168 (2015)

\bibitem{DBLP:conf/sat/HeymanSMLA14}
Heyman, T., Smith, D., Mahajan, Y., Leong, L., Abu{-}Haimed, H.: {Dominant
  Controllability Check Using QBF-Solver and Netlist Optimizer}. In: {SAT}.
  LNCS, vol. 8561, pp. 227--242. Springer (2014)

\bibitem{DBLP:conf/aaai/Janota18}
Janota, M.: {Towards Generalization in QBF Solving via Machine Learning}. In:
  {AAAI}. {AAAI} Press (2018)

\bibitem{Janota20161}
Janota, M., Klieber, W., Marques-Silva, J., Clarke, E.: {{Solving QBF with
  Counterexample Guided Refinement}}. Artif. Intell.  234,  1--25 (2016)

\bibitem{DBLP:conf/sat/JanotaKMC12}
Janota, M., Klieber, W., Marques{-}Silva, J., Clarke, E.M.: {Solving QBF with
  Counterexample Guided Refinement}. In: SAT. LNCS, vol. 7317, pp. 114--128.
  Springer (2012)

\bibitem{DBLP:conf/sat/JanotaM13}
Janota, M., Marques{-}Silva, J.: {On Propositional {QBF} Expansions and
  Q-Resolution}. In: SAT. LNCS, vol. 7962, pp. 67--82. Springer (2013)

\bibitem{DBLP:journals/tcs/JanotaM15}
Janota, M., Marques{-}Silva, J.: {Expansion-Based QBF Solving versus
  Q-Resolution}. Theor. Comput. Sci.  577,  25--42 (2015)

\bibitem{DBLP:conf/ijcai/JanotaM15}
Janota, M., Marques{-}Silva, J.: {Solving {QBF} by Clause Selection}. In:
  {IJCAI}. pp. 325--331. {AAAI} Press (2015)

\bibitem{DBLP:conf/sat/JanotaS11}
Janota, M., Silva, J.P.M.: {Abstraction-Based Algorithm for 2QBF}. In: {SAT}.
  LNCS, vol. 6695, pp. 230--244. Springer (2011)

\bibitem{DBLP:series/faia/BuningB09}
Kleine~B{\"{u}}ning, H., Bubeck, U.: {Theory of Quantified Boolean Formulas}.
  In: Handbook of Satisfiability, FAIA, vol. 185, pp. 735--760. {IOS} Press
  (2009)

\bibitem{DBLP:journals/iandc/BuningKF95}
Kleine~B{\"{u}}ning, H., Karpinski, M., Fl{\"{o}}gel, A.: {Resolution for
  Quantified Boolean Formulas}. Inf. Comput.  117(1),  12--18 (1995)

\bibitem{DBLP:conf/sat/KlieberSGC10}
Klieber, W., Sapra, S., Gao, S., Clarke, E.M.: {A Non-Prenex, Non-Clausal {QBF}
  Solver with Game-State Learning}. In: SAT. LNCS, vol. 6175, pp. 128--142.
  Springer (2010)

\bibitem{DBLP:conf/tableaux/Letz02}
Letz, R.: {Lemma and Model Caching in Decision Procedures for Quantified
  Boolean Formulas}. In: TABLEAUX. LNCS, vol. 2381, pp. 160--175. Springer
  (2002)

\bibitem{DBLP:conf/cade/LonsingE17}
Lonsing, F., Egly, U.: {DepQBF 6.0: A Search-Based QBF Solver Beyond
  Traditional QCDCL}. In: {CADE}. LNCS, vol. 10395, pp. 371--384. Springer
  (2017)

\bibitem{DBLP:journals/corr/LonsingECP18appendix}
Lonsing, F., Egly, U.: {Evaluating {QBF} Solvers: Quantifier Alternations
  Matter}. In: {CP}. LNCS, vol. 11008, pp. 276--294. Springer (2018)

\bibitem{DBLP:conf/sat/PeitlSS17}
Peitl, T., Slivovsky, F., Szeider, S.: {Dependency Learning for QBF}. In:
  {SAT}. LNCS, vol. 10491, pp. 298--313. Springer (2017)

\bibitem{DBLP:conf/fmcad/RabeT15}
Rabe, M.N., Tentrup, L.: {CAQE: A Certifying QBF Solver}. In: {FMCAD}. pp.
  136--143. {IEEE} (2015)

\bibitem{DBLP:conf/sat/RanjanTM04}
Ranjan, D.P., Tang, D., Malik, S.: {A Comparative Study of 2QBF Algorithms}.
  In: {SAT} (2004)

\bibitem{DBLP:conf/aaai/Rintanen07}
Rintanen, J.: {Asymptotically Optimal Encodings of Conformant Planning in
  {QBF}}. In: AAAI. pp. 1045--1050. {AAAI} Press (2007)

\bibitem{DBLP:conf/sat/Tentrup16}
Tentrup, L.: Non-prenex {QBF} solving using abstraction. In: {SAT}. LNCS, vol.
  9710, pp. 393--401. Springer (2016)

\bibitem{DBLP:conf/cav/Tentrup17}
Tentrup, L.: {On Expansion and Resolution in CEGAR Based QBF Solving}. In:
  {CAV}. LNCS, vol. 10427, pp. 475--494. Springer (2017)

\bibitem{DBLP:conf/sat/TuHJ15}
Tu, K., Hsu, T., Jiang, J.R.: {QELL: QBF Reasoning with Extended Clause
  Learning and Levelized SAT Solving}. In: SAT. LNCS, vol. 9340, pp. 343--359.
  Springer (2015)

\bibitem{DBLP:journals/pieee/VizelWM15}
Vizel, Y., Weissenbacher, G., Malik, S.: {Boolean Satisfiability Solvers and
  Their Applications in Model Checking}. Proceedings of the {IEEE}  103(11),
  2021--2035 (2015)

\bibitem{DBLP:conf/tacas/WimmerRM017}
Wimmer, R., Reimer, S., Marin, P., Becker, B.: {HQSpre - An Effective
  Preprocessor for QBF and DQBF}. In: {TACAS}. LNCS, vol. 10205, pp. 373--390.
  Springer (2017)

\bibitem{DBLP:conf/cp/ZhangM02}
Zhang, L., Malik, S.: {Towards a Symmetric Treatment of Satisfaction and
  Conflicts in Quantified Boolean Formula Evaluation}. In: CP. LNCS, vol. 2470,
  pp. 200--215. Springer (2002)

\bibitem{DBLP:conf/cade/Zhang14}
Zhang, W.: {QBF Encoding of Temporal Properties and QBF-Based Verification}.
  In: {IJCAR}. LNCS, vol. 8562, pp. 224--239. Springer (2014)

\end{thebibliography}
\end{document}